\newcommand{\T}{{\mathcal T}}
\begin{document}

\title{On the Efficiency of Dynamic Transaction Scheduling in Blockchain Sharding}



\author{Ramesh Adhikari}
\orcid{0000-0002-8200-9046}
\affiliation{%
  \institution{School of Computer \& Cyber Sciences\\Augusta University}
  \city{Augusta}
  \state{Georgia}
  \country{USA}
  \postcode{30912}
  }
\email{radhikari@augusta.edu}

\author{Costas Busch}
\orcid{0000-0002-4381-4333}
\affiliation{%
  \institution{School of Computer \& Cyber Sciences\\Augusta University}
  \city{Augusta}
  \state{Georgia}
  \country{USA}
  \postcode{30912}
}
\email{kbusch@augusta.edu}

\author{Miroslav Popovic}
\orcid{0000-0001-8385-149X}
\affiliation{%
  \institution{Faculty of Technical Sciences \\ University of Novi Sad}
  \city{Novi Sad}
  \state{}
  \country{Serbia}
  \postcode{}
}
\email{miroslav.popovic@rt-rk.uns.ac.rs}









\begin{abstract}
Sharding is a technique to speed up transaction processing in blockchains, where the $n$ processing nodes in the blockchain are divided into $s$ disjoint groups (shards) that can process transactions in parallel. We study dynamic scheduling problems on a shard graph $G_s$ where transactions arrive online over time and are not known in advance.
Each transaction may access at most $k$ 
shards, and we denote by $d$ the worst distance between a transaction and its accessing (destination) shards (the parameter $d$ is unknown to the shards).
To handle different values of $d$, we assume a locality sensitive decomposition of $G_s$ into clusters of shards, where every cluster has a {\em leader shard} that schedules transactions for the cluster.
We first examine the simpler case of the {\em stateless model},
where leaders 
are not aware of the current state of the transaction accounts,
and we prove a $O(d \log^2 s \cdot \min\{k, \sqrt{s}\})$ competitive ratio for latency.
We then consider 
the {\em stateful model}, 
where leader shards gather 
the current state of accounts, and we prove a $O(\log s\cdot \min\{k, \sqrt{s}\}+\log^2 s)$ competitive ratio for latency.
Each leader calculates the schedule in polynomial time for each transaction that it processes. 
We show that for any $\epsilon > 0$, approximating the optimal schedule within a $(\min\{k, \sqrt{s}\})^{1 -\epsilon}$ factor is NP-hard. Hence, our bound for the stateful model is within a poly-log factor from the best possibly achievable.  
To the best of our knowledge, this is the first work to establish
provably efficient dynamic scheduling algorithms for blockchain sharding systems.
\end{abstract}

\begin{CCSXML}
<ccs2012>
   <concept>
       <concept_id>10010147.10010919.10010172</concept_id>
       <concept_desc>Computing methodologies~Distributed algorithms</concept_desc>
       <concept_significance>500</concept_significance>
       </concept>
   <concept>
       <concept_id>10003752.10003809.10003636.10003808</concept_id>
       <concept_desc>Theory of computation~Scheduling algorithms</concept_desc>
       <concept_significance>500</concept_significance>
       </concept>
 </ccs2012>
\end{CCSXML}

\ccsdesc[500]{Computing methodologies~Distributed algorithms}
\ccsdesc[500]{Theory of computation~Scheduling algorithms}

\keywords{Blockchain, Blockchain Sharding, Dynamic Transaction Scheduling.}

\maketitle

\section{Introduction}
\label{sec:introduction}

Blockchains are known for their special features, such as fault tolerance, transparency, non-repudiation, immutability, and security, and have been used in various applications and domains~\cite{monrat2019survey}.
 However, a drawback of blockchains is that the size of the blockchain network may impact the latency and throughput of transaction processing.
 To append a new block in a blockchain network, the participating nodes reach consensus, which is a time and energy-consuming process~\cite{adhikari2024spaastable}. Moreover, each node is required to process and store all transactions, which leads to scalability issues in the blockchain system. 
 {\em Sharding} protocols have been proposed to address the scalability and performance issues of blockchains~\cite{Elastico,Rapidchain,Byshard,adhikari2023lockless}, which divide the overall blockchain network into smaller groups of nodes called {\em shards} that allow for processing transactions in parallel. In the sharded blockchain, independent transactions are processed and committed in multiple shards concurrently, which improves the blockchain system's throughput. However, most of the existing sharding protocols~\cite{Elastico,OmniLedger,Rapidchain,adhikari2023lockless} do not provide formal analysis for the scheduling time complexity (i.e. how fast the transactions can be processed). 

We consider a blockchain system consisting of $n$ nodes, which are further divided into $s$ shards, where each shard consists of  $n/s$ nodes.
Shards are connected in a graph network $G_s$ with a diameter $D$, and each shard holds a subset of the objects (transaction accounts). We assume that transactions are distributed across the shards, and each transaction accesses at most $k$ accounts.
A transaction $T_i$ initially is in one of the shards, which is called the {\em home shard} for $T_i$. For simplicity, we consider each shard has one transaction at a time, and when that transaction gets processed (either commit or abort), a new transaction will be generated at the home shard.
Similar to other sharding systems~\cite{Byshard,adhikari2023lockless,adhikari2024spaastable}, each transaction $T_i$ is split into subtransactions,
where each subtransaction accesses an account. 
A subtransaction of $T_i$ is sent to the {\em destination shard} that holds the respective account. We assume that the maximum distance between the home shard of a transaction and the respective destination shards in $G_s$ is at most $d\leq D$.
(The parameter $d$ is not known to the system.)

All home shards process transactions concurrently.
A problem occurs when {\em conflicting} transactions
try to access the same account simultaneously.
In such a case, the conflict prohibits the transactions from being committed concurrently
and forces them to serialize~\cite{adhikari2024spaastable}.
Our proposed {\em scheduling algorithms} coordinate the home shards and destination shards to process the transactions (and respective subtransactions) in a conflict-free manner in polynomial time.
Each destination shard maintains a local blockchain of the subtransactions that are sent to it.
The global blockchain can be constructed (if needed) 
by combining the 
local blockchains at the shards~\cite{adhikari2023lockless}.

We consider online dynamic transaction scheduling problem instances where transactions are not known a priori. Moreover, transactions may arrive online and continuously over time, which are generated by electronic devices or some crypto app that resides on shards. 
   Our proposed schedulers determine the time step for each transaction $T_i \in \T$ to process and commit. The execution of our scheduling algorithm is partially synchronous, where communication delay is upper bounded by a system parameter.
The goal of a scheduling algorithm is to efficiently process all transactions while minimizing the total execution time (makespan). 
Unlike previous sharding approaches~\cite{Byshard,OmniLedger,Rapidchain}, our scheduling algorithms are lock-free, namely, they do not require locking mechanisms for concurrency control.

We use the notion of {\em competitive ratio}~\cite{busch2022dynamic} to determine the performance of our scheduling algorithms. The competitive ratio typically measures how well a given online algorithm performs compared to the best possible offline algorithm for a specific sequence of operations.
However, in our model, the transactions generated in the future depend on the execution history. Hence, we define the competitive ratio to capture the volatile transaction history.

\paragraph*{Contributions} 
To our knowledge, this is the first work to present provably efficient online transaction scheduling algorithms for blockchain sharding systems. We summarize our contributions as follows (also see Table~\ref{tbl:contribution-summary}):

\begin{itemize}
    \item {\bf Stateless Scheduling Model:} We first provide transaction scheduling algorithms for the stateless model, where a {\em leader shard} that is responsible for coordinating transaction execution, does not require knowledge of the current state of the accessed accounts.
    In this model, we provide two scheduling algorithms:
    \begin{itemize}
        \item {\bf Single-Leader Scheduler:} In this scheduling algorithm, one of the shards acts as the leader and all other shards send their transaction information to this leader, which determines the global transaction schedule. Our algorithm works in a partially synchronous communication model, but for the sake of performance analysis purposes, we assume a synchronous model. Let the shard network be represented as a general graph $G_s$, where each transaction accesses at most $k$ objects (shards). The maximum distance between home shards, accessed shards, and leader is denoted with $d$. Then, the single-leader scheduler achieves an $O(d \cdot \min\{k, \sqrt{s}\})$ competitive ratio with respect to the optimal scheduler. In the special case where $G_s$ is a clique with unit distances (i.e., $d = 1$), the competitive ratio becomes $O(\min\{k, \sqrt{s}\})$.

        \item {\bf Multi-Leader Scheduler:} A drawback of the single-leader case is that the distance $d$ involves also the position of the leader. On the other hand, in the multi-leader case, $d$ only involves distances between home and respective destination shards. In this scheduler, multiple leaders process the transactions, which distribute the scheduling load among multiple shards. The multi-leader approach allows for a better adaptation to the value $d$ without requiring knowledge of $d$ and without involving distances to the leaders in the definition of $d$.  This approach uses a hierarchical clustering technique~\cite{gupta2006oblivious} to cluster the shard network, which enables the independent scheduling and commitment of transactions within different clusters. This scheduler achieves a competitive ratio of $O(d \log^2 s \cdot \min\{k, \sqrt{s}\})$.
    \end{itemize}

    \item {\bf Stateful Scheduling Model:} We next consider a stateful model where the leader shard requires knowledge of the account states. Namely, a leader shard receives the transactions from the home shards (where transactions are initially generated), and then the leader shard first gathers the current state of the accounts from their corresponding account shards before scheduling and pre-committing the transactions.
    After receiving the state, the leader pre-commits the transactions locally and forwards the pre-committed batch to the destination shards. In this model, the single-leader scheduler achieves a competitive ratio of $O(\min\{k, \sqrt{s}\})$ and the multi-leader scheduler achieves a competitive ratio of $O(\log s\cdot \min\{k, \sqrt{s}\}+\log^2 s)$. Note that these competitive ratios do not depend on $d$ (in contrast to the stateless model), which is the benefit of the stateful approach.

    \item {\bf Approximation Hardness:}   We also show that for any $\epsilon > 0$, obtaining competitive ratio $(\min\{k, \sqrt{s}\})^{1 - \epsilon}$ is NP-hard. Hence, our bound for the stateful single-leader scheduler is asymptotically the best we can achieve in polynomial time, and the bound for the stateful multi-leader scheduler is within a poly-log factor of the best achievable.

    \item {\bf Safety and Liveness Analysis:} We formally analyze the correctness of our proposed schedulers by proving both safety and liveness for the single-leader and multi-leader algorithms.
\end{itemize}

\noindent{\bf Paper Organization:}
The rest of this paper is structured as follows. 
Section \ref{sec:related-work} provides related works. Section \ref{preliminaries} describes the preliminaries for this study and the sharding model. 
Section~\ref{sec:stateless-scheduler} presents a stateless scheduling model with single-leader and multi-leader scheduling algorithms.
In Section~\ref{sec:stateful-scheduler}, we provide the stateful single-leader and multi-leader scheduling algorithms.
Finally, we give our conclusions in Section~\ref{sec:conclusion}.

\section{Related Work}
\label{sec:related-work}

To solve the scalability issue of blockchain, various sharding protocols \cite{Elastico,Rapidchain,Byshard,Zilliqa,Nightshade,Harmony} have been proposed. These protocols have shown promising enhancements in the transaction throughput of blockchain by processing transactions in parallel in multiple shards. However, none of these protocols have specifically explored the theoretical analysis of online transaction scheduling problems in a sharding environment.
To process transactions in parallel in the sharding model, some research work \cite{OmniLedger,Byshard} has used two-phase locking for concurrency control. However, locks are expensive because when one process locks shared data for reading/writing, all other processes attempting to access the same data set are blocked until the lock is released, which lowers system throughput. Moreover, locks, if not handled and released properly, may cause deadlocks. Our scheduling algorithms do not use locks, as concurrency control is managed by scheduling non-conflicting transactions in parallel. In~\cite{adhikari2023lockless} the authors propose lockless blockchain sharding using multi-version concurrency control. However, they lack a performance analysis, and they do not explore the benefits of locality and optimization techniques for transaction scheduling.

\renewcommand{\arraystretch}{1.6}
\begin{table*}[t!]
\centering
\small 
\resizebox{\textwidth}{!}{%
\begin{tabular}{|l|l|l|l|l|}
   \cline{1-5} 
 \multirow{2}{*}{{  }}   & \multicolumn{2}{l|}{ { {\bf \qquad\qquad\qquad Proposed Results}}}&\multicolumn{2}{l|}{ { {\bf \qquad \qquad\qquad Related works}}}
 \\
\cline{2-5} 
   &{{\bf Stateless Model}}  &{\bf Stateful Model}  &{{\bf In~\cite{adhikari2024spaastable}}}  &{\bf In~\cite{adhikari2024fast,liu2024dynashard}} \\
 \cline{1-5}
 Problem & Dynamic Transaction & Dynamic Transaction & Dynamic Transaction & Batch Transaction\\
 \cline{1-5}
 Focus & Performance & Perofrmance & Stability & Performance\\
 \cline{1-5}
  Single Leader   &$O\big(d \cdot \min\{k, \sqrt{s}\}\big)$ &  $O\big(\min\{k, \sqrt{s}\}\big) $ &$36bd \cdot \min \{k, \lceil \sqrt{s} \rceil \}$&$O\big(kd)$\\ 
 \cline{1-5}
 Multi-Leader & $O\big( d \log^2 s \cdot \min\{k, \sqrt{s}\} \big) $&  $O\big( \log s \cdot \min\{k, \sqrt{s}\} + \log^2 s\big)$  &$2 \cdot c_1' bd \log^2 s \cdot \min \{k, \lceil \sqrt{s}\rceil\}$& $O\big(kd\cdot\log d \log s\big)$\\ 
 \cline{1-5}
 Com. Model & Partial-synchronous&  Partial-synchronous & Synchronous& Synchronous \\ 
 \cline{1-5}
\end{tabular}
} 
\caption{
Comparison of our proposed online transaction scheduling algorithm's competitive ratio with related works~\cite{adhikari2024spaastable, adhikari2024fast,liu2024dynashard}. The used notations are as follows: $s$ represents a total number of shards, $k$ denotes the maximum number of shards (objects) accessed by each transaction, $d$ denotes the worst distance between any transaction (home shard) and its accessed objects (destination shard), $b$ denotes the burstiness and $c_1'$ represents some positive constant.
(Note that the bounds in \cite{adhikari2024spaastable} are the actual transaction latencies.)}
\label{tbl:contribution-summary}
\end{table*}

In a recent work \cite{adhikari2024spaastable} (see Table~\ref{tbl:contribution-summary}), the authors provide a stability analysis of blockchain sharding considering adversarial transaction generation. Their focus is on stability, not on performance, where they want to maintain a bounded pending transaction queue size and latency. 
They consider adversarial transaction generation, where at any time interval of duration $t$, the number of generated transactions using any object is bounded by $\rho t + b$, where $\rho \leq 1$ is the transaction injection rate per unit time and $b>0$ is a burstiness injection parameter.
They consider stateless scheduling model, and for the single leader scheduler where the shards are connected in the clique graph with unit distance they provide the stable transaction rate  $\rho \leq \max\{ \frac{1}{18k}, \frac{1}{\lceil 18 \sqrt{s} \rceil} \}$, for which they show the number of pending transactions at any round is at most $4bs$ (which is the upper bound on queue size in each shard), and the latency of transactions is bounded by $36b \cdot \min \{k, \lceil \sqrt{s} \rceil \}$,
If this single leader scheduler is considered in the general graph where the transaction and its accessing object are $d$ far away, then their latency becomes $36bd \cdot \min \{k, \lceil \sqrt{s} \rceil \}$.  Similarly, for a multi-leader scheduler, they provide a stable transaction injection rate 
    $\rho \leq \frac{1}{c_1'd \log^2 s} \cdot \max\{ \frac{1}{k}, \frac{1}{\sqrt{s}} \}$,
    where $c_1'$ is some positive constant.
    For this scheduling algorithm, they show the upper bound on queue size as $4bs$, and transaction latency as 
        $2 \cdot c_1' bd \log^2 s \cdot \min \{k, \lceil \sqrt{s}\rceil\}$. However, they consider a synchronous communication model, which is not practical in blockchain, and they also do not provide a theoretical analysis of the optimal approximation for the scheduling algorithm, and they only consider a stateless scheduling model.
All their latency bounds depend on the burstiness parameter $b$, which can be arbitrarily large, as it expresses a transaction injection burst of arbitrary size in any given time interval.
On the other hand, our system models do not depend on any burstiness parameter,
as we adopt a transaction injection model tuned for performance analysis.

In~\cite{adhikari2024fast,liu2024dynashard} (see Table~\ref{tbl:contribution-summary}), the authors presented batch scheduling algorithms (for a given set of transactions) while they did not consider dynamic transaction generation. Moreover, their provided bounds are not tight even for batch processing. Furthermore, their algorithms work on a synchronous communication model, which might not be applicable in a real-world distributed blockchain network.
The authors in~\cite{liu2024dynashard} only consider single leader algorithms and have worse performance complexity bounds than~\cite{adhikari2024fast} by a factor of $\log D$, resulting in a complexity of 
$O(kd \cdot \log D)$ whereas~\cite{adhikari2024fast} achieves $O(kd)$ approximation for batch transactions. Here, we provide efficient scheduling algorithms with theoretical analysis for dynamic transaction processing in a blockchain sharding system that works in the partially synchronous communication model.

Several works have been conducted on transaction scheduling in shared memory multi-core systems, distributed systems, and transactional memory~\cite{busch2017fast,busch2022dynamic}. In ~\cite{attiya2015directory,sharma2014distributed,sharma2015load}, the authors explored transaction scheduling in distributed transactional memory systems aimed to achieve better performance bounds with low communication costs.  In \cite{busch2017fast} they provide offline scheduling for transactional memory, where each transaction attempts to access an object, and once it obtains the object, it executes the transaction. In another work \cite{busch2022dynamic}, the authors extended their analysis from offline to online scheduling for the transactional memory in a synchronous communication model. However, these works do not address transaction scheduling problems in the context of blockchain sharding. This is because, in the transactional memory model, the considered system models assume that objects are mobile, and once a transaction obtains the object, it immediately executes the transaction. In contrast, in blockchain sharding, an object is static in a shard, and there is a confirmation scheme to confirm and commit each subtransaction consistently in the respective shard.

\section{Technical Preliminaries}
\label{preliminaries}



\subsection{\bf Blockchain Sharding Model:}
 We consider a blockchain sharding model similar to ~\cite{Byshard,adhikari2023lockless,adhikari2024spaastable,adhikari2024fast}, consisting of $n$ nodes 
which are 
partitioned into $s$ shards $S_1, S_2,\dots, S_s$ such that $S_i \subseteq \{1, \ldots, n\}$, for $i \neq j$, $S_i \cap S_j = \emptyset$, $n = \sum_i |S_i|$, and $n_i = |S_i|$ denotes the number of nodes in shard $S_i$.
Let $G_s = (V,E,w)$ denote a weighed graph of shards, where $V = \{S_1, S_2,\dots, S_s\}$, the edges $E$ correspond to the connections between the shards,
and the weight function $w$ represents the distance between the shards.
The graph $G_s$ is complete, since each pair of shards can communicate directly, but the weights of the edges may be non-uniform.

Each shard maintains a local blockchain (which is part of the global blockchain) according to its local accounts and the subtransactions it receives and commits. 
We use $f_i$ to represent the number of Byzantine nodes in shard $S_i$.
To guarantee consensus on the current state of the local blockchain, we assume that every shard executes the PBFT \cite{PBFT} or a similar consensus algorithm.
In order to achieve Byzantine fault tolerance, we assume each shard $S_i$ consists of $n_i > 3 f_i$ nodes.

We assume that shards communicate with each other via message passing~\cite{Byshard}, and here, we are not focusing on optimizing the message size.
We adopt the cluster-sending protocol described in~\cite{hellings2022fault} and Byshard~\cite{Byshard}, where shards run consensus (e.g., the PBFT \cite{PBFT} consensus algorithm within the shard) before sending a message.
    For
    communication between shards $S_1$ and $S_2$, a set $A_1 \subseteq S_1$ of $f_1+1$ nodes in $S_1$ and a set $A_2 \subseteq S_2$ of  $f_2 + 1$ nodes in $S_2$ are chosen (where $f_i$ is the number of faulty nodes in shard $S_i$). Each node in $A_1$ is instructed to broadcast the message to all nodes in $A_2$. Thus, at least one non-faulty node in $S_1$ will send the correct message value to a non-faulty node in $S_2$. (Actually, $A_1$ needs to have size $2f_1 + 1$ to distinguish the correct message).
We consider a partial-synchronous communication model, 
where sending messages for transactions to their accessing shards has a bounded delay.



Suppose we have a set of shared accounts $\mathcal{O}$ (which we also call {\em objects}). 
Similar to previous works in \cite{Byshard,adhikari2023lockless,adhikari2024spaastable}, we assume that each shard is responsible for a specific subset of the shared objects (accounts).
To be more specific, $\mathcal{O}$ is split into disjoint subsets $\mathcal{O}_1, \ldots, \mathcal{O}_s$, where the set of accounts under the control of shard $S_i$ is represented by $\mathcal{O}_i$.
Every shard $S_i$ keeps track of local subtransactions that use its corresponding objects in $\mathcal{O}_i$. 


  



\subsection{\bf Transactions and Subtransactions}
Similar to the works in~\cite{Byshard,adhikari2024spaastable,adhikari2024fast}, we consider transactions $\{ T_1, T_2, \ldots\}$ that are distributed across different shards. Suppose that transaction $T_i$ is generated in a node $v_{T_i}$ within the system, then the {\em home shard} of $T_i$ is the shard containing $v_{T_i}$. 
In this work, we consider transactions that are continuously generated over time. For simplicity and to attain a performance analysis, we assume that each home shard contains at most one transaction at any moment of time, and after the transaction gets processed (either commits or aborts), a new transaction is generated on that home shard.

Similar to work in \cite{Byshard,adhikari2023lockless,adhikari2024spaastable}, we define a transaction $T_i$ as a group of subtransactions $T_{i,a_1},\ldots,T_{i,a_j}$. Each subtransaction $T_{i,a_l}$ accesses objects only in $\mathcal{O}_{a_l}$ and is associated with shard $S_{a_l}$. Therefore, each subtransaction $T_{i,a_l}$ has a respective {\em destination shard} $S_{a_l}$. The home shard sends the transaciton $T_i$ to the leader shard $S_\ell$, which is responsible for processing transaction $T_i$. Then the leader shard of $T_i$ sends subtransaction $T_{i,a_l}$ to shard $S_{a_l}$ for processing, where it is appended to the local blockchain of $S_{a_l}$. The subtransactions within a transaction $T_i$ are independent, meaning they do not conflict and can be processed concurrently. 


    Suppose transaction $T_1$ is: 
    {\em ``Transfer 100 coins from account A to account B''}. Let us assume that the accounts of A and B reside on different shards $S_a$ and $S_b$, respectively. $T_1$ splits into the following subtransactions:

\begin{itemize}
    \item[$T_{1,a}$ in $S_a$:] {Condition:} Check if account A has at least 100 coins.\\
    {Action:} Deduct 100 coins from account A.
    \item[$T_{1,b}$ in $S_b$:] {Action:} Add 100 coins to account B.
\end{itemize}

\subsection{Stateless and Stateful Scheduling Models}
\label{sec:stateless-and-statefull-model}
We define two scheduling models to schedule and process the transactions, the stateless and stateful models, which we describe as follows.

{\bf Stateless Scheduling Model:} Let's suppose there is a designated {\em leader shard} $S_\ell$ that coordinates the scheduling and processing of transactions.
In this model, the leader shard $S_\ell$ does not maintain the current state of accounts accessed by the transactions~\cite{adhikari2024spaastable,adhikari2024fast, Byshard}. Upon receiving transactions, $S_\ell$ constructs (or extends) a transaction conflict graph and colors the graph using
an incremental greedy vertex coloring algorithm to determine the commit order for each transaction. Then the leader $S_\ell$ splits each transaction into subtransactions based on accessed accounts and sends them to the corresponding {\em destination shards} that hold the relevant account states. Each destination shard maintains the scheduled subtransactions queue $sch_{dq}$ and it picks one color subtransaction from the header of $sch_{dq}$, validates the sub-transactions (e.g., checking account balances) and sends a {\em commit} or {\em abort} vote to the leader. After collecting all votes for a transaction, the leader sends a final decision to each destination shard, which either commits or aborts the subtransactions according to the message received from the leader shard.

For example, suppose $S_\ell$ receives transactions $T_1, T_2, T_3$, each accessing accounts $a, b, c$, located in shards $S_a, S_b, S_c$ respectively (see Figure~\ref{fig:stateless_and_stateful_model} (a)). The leader constructs a conflict graph $G_{\T_\ell}$ and applies a greedy vertex coloring algorithm
to define a commit order. It then splits transactions into sub-transactions:
\[
T_1 \rightarrow \{T_{1,a}, T_{1,b}, T_{1,c}\},\quad
T_2 \rightarrow \{T_{2,a}, T_{2,b}, T_{2,c}\},\quad
T_3 \rightarrow \{T_{3,a}, T_{3,b}, T_{3,c}\}
\]
Each destination shard queues the received sub-transactions in a schedule queue $sch_{dq}$ according to the commit order received from $S_\ell$, and it processes one color subtransaction at a time. This means $S_a$ picks $T_{1,a}$, $S_b$ picks $T_{1,b}$ and $S_c$ picks $T_{1,c}$ from head of their queues, check the validity and condition of the subtransaction (such as account balance) and send either commit or abort votes to the leader shard. Then the transaction $T_i$ and its subtransactions ($T_{1,a}, T_{1,b}$ and $T_{1,c}$) are committed or aborted based on the final decision received from the leader shard. Next, each destination shard processes the next color subtransactions, for instance $T_{2,a}$, from $S_a$,  $T_{2,b}$ from $S_b$, and $T_{2,c}$ from $S_c$ (see Figure~\ref{fig:stateless_and_stateful_model} (a)), and this process repeats.
\begin{figure*}
  \centering
  \includegraphics[width=0.99\textwidth]{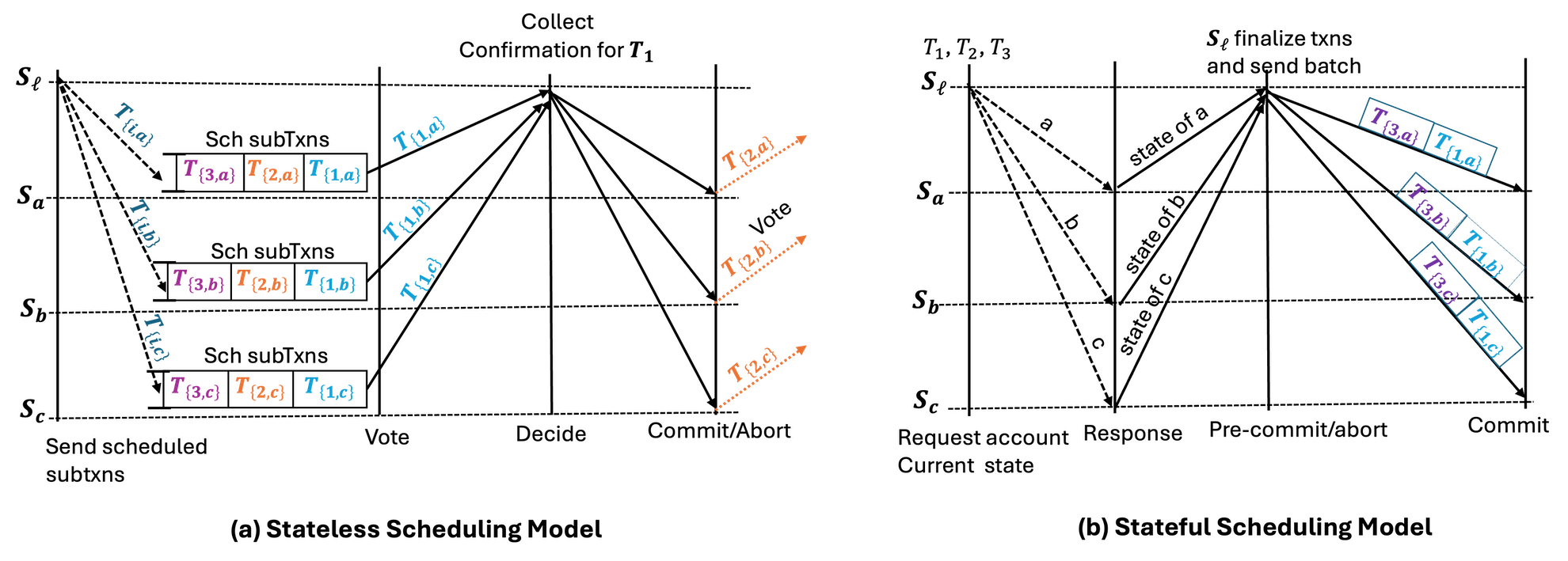}
  \caption{Illustration of Stateless (a) and Stateful (b) Scheduling Models.}
  \label{fig:stateless_and_stateful_model}
\end{figure*}

{\bf Stateful Scheduling Model:}
In the stateful model, the home shard where a transaction is initially generated sends its transaction information to the leader shard $S_\ell$. Then the leader shard $S_\ell$ stores these transactions (i.e. $\T_1, T_2, T_3$) in its pending transaction queue $PQ_\ell$. Then, the leader shard identifies accounts accessed by transactions and requests their state from corresponding shards $S_a, S_b, S_c$. In other words, before processing the transactions, the leader collects the current state of all accessed accounts from the corresponding destination shards.  Once the account states are gathered, the leader constructs a conflict graph 
on which it applies the incremental greedy vertex coloring algorithm.
Then the leader shard performs local \textit{pre-commit} for valid transactions (e.g., $T_1$, $T_3$) and aborts invalid transactions (e.g., $T_2$). After that, $S_\ell$ creates the pre-committed sub-transaction batches:
$
S_a: \{T_{1,a}, T_{3,a}\},\quad
S_b: \{T_{1,b}, T_{3,b}\},\quad
S_c: \{T_{1,c}, T_{3,c}\}
$ for each destination shard $S_a$, $S_b$, $S_c$. Then these pre-committed batches are sent to the respective destination shards. Since the transactions have already been validated, each destination shard can directly commit and append the received pre-committed order to its local blockchain without further interaction with the leader.

The main difference between the stateless and stateful model is that the stateful model requires the leader to be updated about account states which are at remote shards, while the stateless model does not need to be informed about remote accounts.
This additional account information in the stateful model allows for more efficient transaction processing at the expense of added communication complexity.

\subsection{Conflicts and Competitive Ratio}

Two transactions conflict if they attempt to access the same account, and at least one of the two updates the account. 
The subtransactions are processed sequentially at each destination shard. For this reason,
we extend the notion of conflict to all transactions that access account (not necessarily the same) in the same destination shard.

\begin{definition}[Conflict]
 \label{def:conflict}
Transactions $T_i$ and $T_j$ are said to {\em conflict} if they access accounts on the same destination shard $S_k$ and at least one of these transactions 
writes (updates) the account in $S_k$. 
 \end{definition}
 Transactions that conflict should be processed in a sequential manner
to guarantee atomic object update.
In such a case, their respective subtransactions 
should be serialized in the exact same order in every involved shards.
 To resolve the conflict between two transactions $T_i$ and $T_j$ while accessing the same destination shard $S_k$, a scheduler must schedule them one after another in such a way that $T_i$ commits before $T_j$ or vice versa.
 To perform the schedule, we use a {\em conflict graph} such that the nodes are transactions, and an edge represents  a conflict between two transactions.



We continue with the definition of competitive ratio for our scheduling models.
The definition below is an adaptation of the competitive ratio used in dynamic execution in software transactional memory \cite{busch2022dynamic}.
Since the future transactions depend on the past execution,
we define the competitive ratio based on any set of transactions
that may appear at any moment of time.
Consider a transaction schedule $S$.
Let $\T_t$ denote the set of all pending transactions (that have not committed or aborted) at time $t$.
Let $t' > t$ denote the time such that all transactions in $\T_t$ finalize (commit or abort).
Let $\tau^*$ denote the optimal time duration to finalize (commit or abort) all the transactions in $\T_t$ if they were the only transactions in the system, processed as a batch.
The approximation ratio for $S$ 
at time $t$ is $r_S(t) = (t' - t) / \tau^*$.
The competitive ratio for $S$ is $r_S = \sup_t r_S(t)$.

\begin{definition}[Algorithm Competitive Ratio]
For online scheduling algorithm $\mathcal{A}$, 
the competitive ratio $r_\mathcal{A}$ is the maximum competitive ratio over all possible schedules $\mathcal{S}$ that it produces, $r_\mathcal{A} = \sup_{S \in \mathcal{S}} r_S$.
(We also say that $\mathcal{A}$ is $r_\mathcal{A}$-competitive.)
\end{definition}

\section{Stateless Scheduler}
\label{sec:stateless-scheduler}
In this section, we consider the stateless sharding model~\cite{adhikari2023lockless,adhikari2024spaastable,Byshard}, where the leader shard is responsible for coordinating transaction processing and does not gather the current state of account information (see Section~\ref{sec:stateless-and-statefull-model}). We present two transaction scheduling algorithms: the Single-Leader Scheduler and the Multi-Leader Scheduler.

\subsection{Stateless Single-Leader Scheduler}
\label{sec:stateless-single-leader-scheduler}
In this section, we describe and analyze the {\em Stateless Single-Leader  Scheduler}, which operates under a partially synchronous communication model. We assume a designated leader shard $S_\ell$ responsible for determining the transaction schedule. All shards send their transactions to the leader shard, which builds a transaction conflict graph and applies an incremental greedy vertex coloring algorithm 
to determine a schedule.

\begin{algorithm*}[t]
\small 
\caption{\sc Stateless Single Leader Scheduler}
\label{alg:stateless-single-leader-scheduler}

txn: transaction; txns: transactions; subTxn: subtransaction; subTxns: subtransactions\;  
$T_i$: txn, $T_{i,j}$: subTxn of $T_i$ for shard $S_j$, $\T_{\ell}$: Set of txns maintained by leader shard $S_\ell$; $is\_busy$: processing flag (initially false at each shard);
Each shard $S_j$ maintains a lexicographically ordered scheduled queue $sch_{dq}$ for subtransactions\;

\BlankLine
\SetKwBlock{WhenTxnGenerated}{\normalfont {\bf Upon generation of a new txn $T_{i}$ at home shard $S_i$}}{}
\WhenTxnGenerated{
    $S_i$ tags local timestamp (ts) to $T_{i}$ and send it to the leader shard $S_\ell$\;\label{alg1:home-shard-send-txn}
}

\BlankLine
\SetKwBlock{WhenLeaderReceivesTxn}{\normalfont {\bf Upon receiving new txn $T_{i}$ at leader shard $S_\ell$}}{}
\WhenLeaderReceivesTxn{
    $S_\ell$ adds $T_{i}$ to txns set $\T_\ell$ and extend transaction conflict graph $G_{\T_\ell}$ with $T_{i}$\;\label{alg1:conflict-graph}
    If any colored txn $T_x$ exists with $\text{ts}(T_x) > \text{ts}(T_i)$, cancel its color, prioritize $T_i$, and send cancel message for $T_x$ to corresponding destination shards\;\label{alg1:check-timestamp}
    Run incremental greedy coloring on $G_{\T_\ell}$  without altering already scheduled (colored) old txns\;\label{alg1:graph-coloring}
    Split each newly colored $T_{i}$ into subtxns $T_{i,j}$ and
    send to respective destination shard $S_j$\;\label{alg1:split-txn-by-leader}
}

\BlankLine
\SetKwBlock{UponReceiveSubTxn}{\normalfont \textbf{Upon receiving subtransaction} $T_{i,j}$ \textbf{at each destination shard} $S_j$}{end}
\UponReceiveSubTxn{
    Append $T_{i,j}$ in $sch_{dq}$ and order (sort) $sch_{\text{qd}}$ lexicographically according to color\;\label{alg1:sch-queue}
    \If{$is\_busy == \text{false}$}{
        Set $is\_busy = true$\;
         Let $T_{i,j} \leftarrow$ head of $sch_{dq}$;
    If $T_{i,j}$ is valid and local conditions satisfied, it sends {\em commit vote} to leader shard $S_\ell$;
    Otherwise, it sends {\em abort vote} to $S_\ell$\;\label{alg1:destinaiton-process-txn}
    
    }
}

\BlankLine

\SetKwBlock{WhenLeaderReceivesVotes}{\normalfont {\bf Upon receiving votes for txn $T_i$ at leader shard $S_\ell$}}{}
\WhenLeaderReceivesVotes{
If any abort vote receive for $T_i$ then  it sends {\em confirmed abort} to all corresponding $S_j$ of $T_i$;\label{alg1:leader-coordinate1}
else if all received votes are commit votes, then it
        sends {\em confirmed commit} to corresponding $S_j$\;

    Remove $T_i$ from $\T_{\ell}$ and $G_{\T_{\ell}}$ and send outcome(committed or aborted) to home shard of $T_i$\;\label{alg1:leader-coordinate2}
}

\BlankLine
\SetKwBlock{UponReceiveDecision}{\normalfont \textbf{Upon receiving confirmation for subtxn} $T_{i,j}$ \textbf{at each destination shard} $S_j$}{end}
\UponReceiveDecision{
   
    If the confirmed commit is received, then
        it commit $T_{i,j}$ and append to its local blockchain;
    
    Otherwise, if confirmed abort message received then it abort $T_{i,j}$\;

    If $sch_{dq}$ is not empty, it start to process next subTxn from $sch_{dq}$, else it set $is\_busy = false$\;
   
}

    
\BlankLine
\SetKwBlock{WhenDestinationShardReceivesIcancel}{\normalfont {\bf Upon receiving cancel message for $T_{x,j}$ at destination shard $S_j$}}{}
\WhenDestinationShardReceivesIcancel{
Remove $T_{x,j}$ from $sch_{dq}$; a new color will be received later for $T_{x,j}$ from leader shard $S_\ell$\;
}

\BlankLine
\SetKwBlock{WhenHomeShardReceivesOutcome}{\normalfont {\bf Upon receiving outcome of $T_i$ at home shard $S_i$}}{}
\WhenHomeShardReceivesOutcome{
    Generate next transaction and repeat process\;
}
\end{algorithm*}
The algorithm follows an event-driven approach to schedule and process the transactions. When a new transaction $T_i$ is generated at its home shard $S_i$, then the home shard tags the current timestamp to the transaction $T_i$ and sends the transaction to the leader shard $S_\ell$. Upon receiving $T_i$, the leader adds it to the local transaction set $\T_\ell$ and extends the conflict graph $G_{\T_\ell}$ with this new transaction ($T_{i}$). If $T_i$ is older than any already-colored but uncommitted transactions (say $T_x$), the leader cancels the color of those newer transactions, notifies the relevant shards, and reprocesses them later. This ensures older transactions are prioritized, avoiding starvation. The leader then runs an incremental greedy vertex coloring algorithm~\cite{busch2022dynamic} to assign colors to all newly received transactions, without modifying the colors of already scheduled old transactions. This ensures that the processing time of already scheduled transactions is not affected by newly generated transactions. Note that a newer transaction might receive a lower color than an older one because the new one does not conflict with any other transaction (except one old transaction), while the old transaction conflicts with others as well. To prevent this and ensure a fair execution order, we assign each new transaction a color no lower than the smallest color among pending old transactions. This approach guarantees progress because at each time step, the lowest possible color will increase over time. After coloring and determining the schedule, each transaction is then split into subtransactions $T_{i,j}$ based on the destination shards it accesses, and these subtransactions are sent to the corresponding shards $S_j$ for processing.


Each destination shard $S_j$ maintains a local scheduled queue $sch_{dq}$ (consisting of subtransactions that have been scheduled but not yet committed) and appends incoming subtransactions into $sch_{dq}$, which stores subtransactions in the order of their assigned color. To handle partial synchrony, each destination shard $S_j$ uses a busy flag to track whether it is currently processing (in-transit and not committed yet) a subtransaction. If the shard is not busy, it picks one subtransaction from the head of the queue and validates it (e.g., checking conditions like account balance). If the subtransaction is valid, the shard $S_j$ sends a \emph{commit vote} to the leader $S_\ell$; otherwise, it sends an \emph{abort vote}. Once the leader shard receives votes from all relevant destination shards for a transaction $T_i$, it decides whether the transaction should be committed or aborted. If all subtransactions vote to commit, the leader sends a \emph{confirmed commit} to each destination shard; otherwise, if any one of the shard send an abort vote, it sends a \emph{confirmed abort}. After the decision, the transaction $T_i$ is removed from the conflict graph $G_{\T_\ell}$ and the transaction set $\T_\ell$, and the outcome (committed or aborted) is sent to the home shard of $T_i$.

Upon receiving the confirmed decision, each destination shard either commits the subtransaction by appending it to the local blockchain or aborts it. If the scheduled queue is not empty, the shard continues processing the next subtransaction. If the queue becomes empty, the shard marks itself as not busy. Finally, upon receiving the outcome from the leader, the home shard generates a new transaction and repeats the process. This single leader scheduling approach ensures conflict-free execution while preserving consistency and fairness in transaction processing across shards.


\subsubsection{Correctness Analysis of Stateless Single-Leader Scheduler (Algorithm~\ref{alg:stateless-single-leader-scheduler})} 
\label{app:safety-and-liveness-analysis-of-stateless-single-leader-scheduler}


Our proposed scheduling algorithm works on a partial-synchronous communication model; for the sake of analysis only, we consider the synchronous communication mode.



\begin{lemma}[Safety]
\label{lemma:stateless-single-leader-safety}
   If two transactions conflict with each other in Algorithm~\ref{alg:stateless-single-leader-scheduler}, then they will commit in different time slots, and the local chain produced by Algorithm~\ref{alg:stateless-single-leader-scheduler} ensures blockchain serialization.
\end{lemma}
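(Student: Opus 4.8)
The plan is to prove the two claims of the lemma separately but using the same underlying structural fact: the incremental greedy vertex coloring assigns \emph{distinct colors} to any two conflicting transactions, and every destination shard processes subtransactions in the total order induced by these colors (broken by a fixed lexicographic tie-break). First I would set up the formal objects. By Definition~\ref{def:conflict}, if $T_i$ and $T_j$ conflict then they both access some common destination shard $S_k$, and in the conflict graph $G_{\T_\ell}$ there is an edge $(T_i,T_j)$. Since Algorithm~\ref{alg:stateless-single-leader-scheduler} (line~\ref{alg1:graph-coloring}) applies greedy vertex coloring, adjacent vertices receive different colors; hence $c(T_i)\neq c(T_j)$. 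This color gap is the engine for everything else.

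For the \textbf{safety} (mutual-exclusion) claim, I would argue that distinct colors force distinct commit slots. The key observation is that each destination shard $S_k$ keeps a single scheduled queue $sch_{dq}$ sorted by color (line~\ref{alg1:sch-queue}) and, guarded by the $is\_busy$ flag, processes exactly \emph{one} subtransaction at a time: it dequeues the head, sends a vote, and only advances after the leader's confirmed commit/abort arrives. Thus at $S_k$ the conflicting subtransactions $T_{i,k}$ and $T_{j,k}$ cannot be in-flight simultaneously, and because they carry different colors they occupy different positions in the color order; WLOG $c(T_i)<c(T_j)$ means $T_{i,k}$ is fully resolved (committed or aborted, via the confirmation phase in lines~\ref{alg1:leader-coordinate1}--\ref{alg1:leader-coordinate2}) before $T_{j,k}$ is ever processed. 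In the synchronous model this yields disjoint commit time slots for $T_i$ and $T_j$. The timestamp-based color cancellation in line~\ref{alg1:check-timestamp} must be handled carefully: I would note that when a younger colored transaction $T_x$ is cancelled, its subtransactions are removed from the queues (the cancel handler) and it is later recolored, so no transaction is ever committed with a stale color and the ``one color at a time'' invariant is preserved across re-colorings.

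For the \textbf{serialization} claim, I would show that all shards accessed by a pair of conflicting transactions commit them in the \emph{same} relative order, which is exactly what is needed to construct a consistent global serialization order from the per-shard local blockchains. The argument is that the color assignment is global — it is computed once at the leader $S_\ell$ and the same colors are shipped to every destination shard — so for any two conflicting $T_i,T_j$ with $c(T_i)<c(T_j)$, \emph{every} shard that holds subtransactions of both orders $T_{i,\cdot}$ before $T_{j,\cdot}$ in its $sch_{dq}$. Therefore there is no pair of shards that disagree on the commit order of a conflicting pair, so the union of local chains admits a topological order consistent with increasing color, i.e. a serial schedule equivalent to committing transactions in color order. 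I would also invoke atomicity of commit: because the leader collects all votes and issues a single confirmed decision (lines~\ref{alg1:leader-coordinate1}--\ref{alg1:leader-coordinate2}), a transaction either commits at all its destination shards or aborts at all of them, so no shard records a partial or contradictory outcome.

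The main obstacle I anticipate is rigorously discharging the interaction between \emph{partial} asynchrony/re-coloring and the clean ordering statement. Specifically, the $is\_busy$ flag gives mutual exclusion only per shard, and the colors can be revised by the cancellation rule of line~\ref{alg1:check-timestamp}; so I must argue an invariant that survives re-coloring: at any moment, for every destination shard the set of \emph{committed-or-in-flight} subtransactions is a prefix of the current color order, and recoloring only touches transactions strictly ahead of (larger color than) anything already committed. Establishing this invariant — that a cancellation never rewrites the color of an already-committed or currently-voting transaction, only of scheduled-but-not-yet-processed ones — is the delicate step, and it is what licenses collapsing the analysis to the synchronous model the lemma assumes. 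Once that invariant is in place, both the disjoint-slot and same-relative-order conclusions follow directly from the single global coloring.
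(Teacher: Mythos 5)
Your proposal is correct and follows essentially the same route as the paper's proof: conflicting transactions are adjacent in $G_{\T_\ell}$, so the incremental greedy coloring (Line~\ref{alg1:graph-coloring}) assigns them distinct colors; each destination shard processes its $sch_{dq}$ one subtransaction at a time in color order (Lines~\ref{alg1:sch-queue}--\ref{alg1:destinaiton-process-txn}); and the leader's vote-and-confirm phase (Lines~\ref{alg1:leader-coordinate1}--\ref{alg1:leader-coordinate2}) makes the relative commit order of any conflicting pair identical across all involved shards, which yields both disjoint commit slots and global serialization.

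The one point where you go beyond the paper is the timestamp-based cancellation rule (Line~\ref{alg1:check-timestamp}): the paper's proof never mentions it, whereas you correctly observe that the clean ``commit in color order'' argument only survives re-coloring if one establishes the invariant that a cancellation never rewrites the color of a transaction that is already committed or currently in its voting phase, only of scheduled-but-not-yet-processed ones. That invariant is genuinely needed (the algorithm removes cancelled subtransactions from $sch_{dq}$, but the text does not pin down what happens if the cancelled subtransaction is the one currently at the head and voting), and the paper's proof implicitly assumes it. So your plan is, if anything, more complete than the published argument; the only outstanding work is the one step you yourself flag as remaining to be discharged, which the paper also leaves unaddressed.
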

\begin{proof}
We prove this by induction (analyzing) the execution of Algorithm~\ref{alg:stateless-single-leader-scheduler}, where each home shard sends its transaction to the leader shard (Line~\ref{alg1:home-shard-send-txn}), and the leader shard constructs the transaction conflict graph $G_{\T_\ell}$ (Line~\ref{alg1:conflict-graph}). Then the leader used the incremental greedy vertex coloring algorithm~\cite{busch2022dynamic} on the conflict graph $G_{\T_\ell}$ (Line~\ref{alg1:graph-coloring}). As conflicting transactions share an edge in $G_{\T_\ell}$, they are assigned different colors and are processed in different time slots, which provides the valid commit order. Moreover, each color corresponds to a unique serialization time slot. 
The leader shard splits the transaction into subtransactions and sends them to the destination shard after coloring (see Line~\ref{alg1:split-txn-by-leader}), then each destination shard keeps that ordering in the schedule queue ($sch_{dq}$) and process subtransactions one by one according to the color they get (see Line~\ref{alg1:sch-queue}-\ref{alg1:destinaiton-process-txn}), which guarantees the consistent schedule order in each shard. Moreover, the leader shard coordinates to commit the subtransactions in each destination shard, which ensures the consistent commitment (see Line~\ref{alg1:leader-coordinate1}-\ref{alg1:leader-coordinate2}). As the subtransactions are committed according to the color they receive, and each color corresponds to a globally consistent time slot, this provides global serialization.
\end{proof}

\begin{lemma}[Liveness]
\label{lemma:stateless-single-leader-liveness}
    Algorithm~\ref{alg:stateless-single-leader-scheduler} guarantees that every generated transaction will eventually be either committed or aborted.
\end{lemma}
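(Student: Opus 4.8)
The plan is to prove liveness by establishing three facts and then combining them: (i) the incremental greedy coloring of Line~\ref{alg1:graph-coloring} always assigns every pending transaction a finite color drawn from a bounded palette; (ii) each destination shard makes bounded progress, so that once a subtransaction reaches the head of its schedule queue $sch_{dq}$ it is finalized within a bounded number of communication rounds; and (iii) no transaction is indefinitely preempted, i.e., the scheduler is starvation-free. Combining these, I would fix an arbitrary transaction $T_i$ and argue that it eventually becomes the minimum-color transaction being processed at each of its destination shards, and is therefore committed or aborted in finite time.

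First I would bound the palette. By the system model each home shard holds at most one transaction at any instant, so at every moment $|\T_\ell| \le s$ and the conflict graph $G_{\T_\ell}$ is finite with at most $s$ vertices. Hence the greedy coloring assigns each transaction a \emph{finite} color (bounded at any instant by a function of $|\T_\ell|\le s$), and by the clause in Line~\ref{alg1:graph-coloring} the colors of already-scheduled transactions are not altered, so a color, once fixed, stays fixed. Next, by Lemma~\ref{lemma:stateless-single-leader-safety} the subtransactions of conflicting transactions are totally ordered by color in every $sch_{dq}$ and are processed one at a time under the busy flag (Lines~\ref{alg1:sch-queue}--\ref{alg1:destinaiton-process-txn}). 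In the synchronous model each finalization consists of one vote round from the shards to $S_\ell$ and one confirmation round back (Lines~\ref{alg1:destinaiton-process-txn}--\ref{alg1:leader-coordinate2}), so the transaction at the head of every involved queue is committed or aborted within a constant number of rounds, after which the busy flag is released and the next color advances.

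The crux --- and the step I expect to be the main obstacle --- is ruling out starvation, since the timestamp-based preemption in Line~\ref{alg1:check-timestamp} repeatedly decolors already-scheduled transactions and could in principle induce a livelock of endless recoloring. I would control this with a ranking (well-foundedness) argument on timestamps. A colored transaction $T_x$ is decolored only when a transaction $T_i$ with $\text{ts}(T_i) < \text{ts}(T_x)$ arrives at the leader; because generation takes positive time and each shard emits one transaction at a time, only finitely many transactions carry a timestamp smaller than $\text{ts}(T_x)$, so $T_x$ can be preempted only finitely often. After its last preemption $T_x$ keeps a fixed color. Moreover, the rule that a newly arriving transaction receives a color no smaller than the minimum color among pending older transactions (discussed after Algorithm~\ref{alg:stateless-single-leader-scheduler}) guarantees that, once the oldest pending transactions have stabilized, the minimum color present in the system is monotonically non-decreasing. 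Hence the current color being finalized at each destination shard strictly advances over time and eventually reaches any fixed color.

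Finally I would assemble the pieces. Fix any transaction $T_i$ and let $c_i$ be its eventually-stable color. By the ranking argument $T_i$ is preempted only finitely many times, so $c_i$ is well defined; by the progress argument every color below $c_i$ is finalized within bounded time, so the minimum active color rises to $c_i$. At that point $T_i$ sits at the head of the schedule queue of each of its destination shards, the leader collects its votes and issues a \emph{confirmed commit} or \emph{confirmed abort} (Lines~\ref{alg1:leader-coordinate1}--\ref{alg1:leader-coordinate2}), and $T_i$ is finalized and removed from $\T_\ell$ and $G_{\T_\ell}$. Since $T_i$ was arbitrary, every generated transaction is eventually committed or aborted.
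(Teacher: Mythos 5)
Your proof is correct in the same sense that the paper's is, and it follows the same overall route: incremental greedy coloring that never alters already-scheduled colors (Line~\ref{alg1:graph-coloring}), color-ordered processing at each destination queue under the busy flag, and the floor rule (new colors no lower than the minimum pending color) to argue that the active color advances. The difference is one of rigor rather than strategy. The paper's proof is essentially a walkthrough of the algorithm: it \emph{asserts} that ``older transactions are always prioritized, and no transaction is indefinitely prevented from being scheduled due to newer ones,'' without examining whether the decoloring rule of Line~\ref{alg1:check-timestamp} could cause unbounded recoloring. You correctly single this out as the crux and close it with a well-foundedness argument: a colored transaction can be decolored only by an arrival with a strictly smaller timestamp, and only finitely many such transactions exist (one pending transaction per home shard, positive generation time), so every transaction's color stabilizes after finitely many preemptions. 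That ranking argument, together with your explicit bound $|\T_\ell|\le s$ on the palette, is a genuine addition that the paper's proof lacks, and it is what each approach buys: the paper gets brevity, you get an actual termination certificate for the preemption mechanism.

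One caveat so you do not overestimate what you have shown: your last step --- that the monotonically non-decreasing minimum color ``strictly advances \ldots and eventually reaches any fixed color'' --- is asserted at the same level of detail as the paper's corresponding claim that ``the lowest possible color will increase over time.'' Neither argument rules out that freshly generated transactions re-occupy low colors: the floor rule only lower-bounds a new color by the \emph{current} minimum pending color, and a newly arriving transaction that does not conflict with the current minimum-color transaction may take that same color again, so monotone non-decrease of the minimum does not by itself yield strict advancement past the fixed color $c_i$. On this point you are no weaker than the paper, but also no stronger; a fully rigorous liveness proof would need an additional argument (e.g., that only finitely many transactions can ever be inserted ahead of $T_i$ in any queue it occupies) to convert non-decrease into eventual passage.
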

\begin{proof}
We prove liveness by induction, showing that every transaction $T_i$ is either committed or aborted in finite time. Each new transaction $T_i$ is sent to a leader shard $S_\ell$ (Line~\ref{alg1:home-shard-send-txn}), which adds it to the set $\T_\ell$ and the conflict graph $G_{\T_\ell}$. If $T_i$ is older than any already colored but not committed transaction $T_x$, the algorithm cancels the color of $T_x$ and re-colors the graph (Line~\ref{alg1:check-timestamp}). Coloring is performed incrementally (Line~\ref{alg1:graph-coloring}) and preserves the colors of previously scheduled transactions. Thus, older transactions are always prioritized, and no transaction is indefinitely prevented from being scheduled due to newer ones. Note that a newer transaction might receive a lower color than an older one because the new one does not conflict with any other transaction (except one old transaction), while the old transaction conflicts with others as well. To prevent this and ensure a fair execution order, we assign each new transaction a color no lower than the smallest color among pending old transactions. This approach guarantees progress because at each time step, the lowest possible color will increase over time.

Moreover, once $T_i$ is colored, its subtransactions are sent to the respective destination shards (Line~\ref{alg1:split-txn-by-leader}), where they are placed into a queue $sch_{dq}$ sorted by color (Line~\ref{alg1:sch-queue}). Each shard processes one color group at a time, controlled by a busy flag. After finishing one subtransaction (commit or abort), the shard proceeds to the next one in the queue. Since every color is eventually dequeued, and subtransactions are processed in order, every scheduled subtransaction is eventually processed.
Thus, every transaction is either committed or aborted in a finite time, and this proves the liveness.
\end{proof}

\begin{corollary}
    From Lemma~\ref{lemma:stateless-single-leader-safety} and Lemma~\ref{lemma:stateless-single-leader-liveness}, Algorithm~\ref{alg:stateless-single-leader-scheduler} ensures the safety and liveness of the transactions.
\end{corollary}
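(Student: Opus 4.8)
The plan is to derive the corollary as an immediate consequence of the two preceding lemmas, since the statement ``safety and liveness'' is precisely the conjunction of the properties those lemmas already establish. First I would invoke Lemma~\ref{lemma:stateless-single-leader-safety} to obtain the safety half: any two conflicting transactions share an edge in the conflict graph $G_{\T_\ell}$ and therefore receive distinct colors under the incremental greedy coloring, so they commit in different time slots. Since each color corresponds to a globally consistent serialization slot that every destination shard respects in its queue $sch_{dq}$, and the leader coordinates a single commit/abort decision per transaction, the local chains combine into a serializable global order. This directly yields the safety guarantee.

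Next I would invoke Lemma~\ref{lemma:stateless-single-leader-liveness} for the liveness half: because coloring is performed incrementally and older transactions are prioritized (with the rule assigning each new transaction a color no lower than the smallest color among pending old transactions, which rules out starvation and forces the lowest available color to increase over time), every transaction is eventually dequeued from $sch_{dq}$, processed by its destination shards, and finalized as either committed or aborted in finite time. Conjoining the two lemmas, Algorithm~\ref{alg:stateless-single-leader-scheduler} satisfies both properties simultaneously.

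The only point worth checking---and it is minor---is that the two guarantees are mutually compatible, i.e.\ that the conflict-resolution and serialization mechanism used for safety does not undermine the progress argument used for liveness, and vice versa. This is immediate from the construction: safety is enforced purely by the coloring of $G_{\T_\ell}$ together with the per-shard, color-ordered processing, whereas liveness is enforced by the incremental, starvation-free scheduling discipline; neither mechanism overrides the other, and the color-cancellation step only reschedules transactions rather than discarding them. Hence no genuine obstacle arises, and the corollary follows by simply combining Lemma~\ref{lemma:stateless-single-leader-safety} and Lemma~\ref{lemma:stateless-single-leader-liveness}.
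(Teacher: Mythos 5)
Your proposal is correct and matches the paper exactly: the paper states this corollary without any further proof, treating it as the immediate conjunction of Lemma~\ref{lemma:stateless-single-leader-safety} and Lemma~\ref{lemma:stateless-single-leader-liveness}, which is precisely your argument (your additional compatibility check is harmless but unnecessary).
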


\subsubsection{Performance Analysis of Single-Leader Scheduler (Algorithm~\ref{alg:stateless-single-leader-scheduler})} 
Our proposed scheduling algorithm works on a partial-synchronous communication model; for the sake of performance analysis only, we consider the synchronous communication mode.
In the following, we analyze the time units required to process transactions by Algorithm \ref{alg:stateless-single-leader-scheduler}. We are focusing on the time period after the leader shard has determined the schedule for the transactions.
In the synchronous case, a time unit is the time to send a message along an edge of unit weight.
In the single-leader case, $d$ is sensitive to the position of the leader and $d$ denotes the maximum distance between any of the involved shards (home, destination shards, leader shard).
In the multi-leader case, the distance to the leaders is not included in the definition of $d$.

\begin{theorem}
\label{theorem:stateless-general-graph-competative-ratio}
    [General Graph] In the General graph, where the transactions, their accessing objects, and the leader are at most $d$ distance away from each other, Algorithm~\ref{alg:stateless-single-leader-scheduler} has $O(d\cdot \min\{k, \sqrt{s}\})$ competitive ratio.
\end{theorem}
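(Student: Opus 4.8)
The plan is to express both the algorithm's running time and the optimum in terms of two combinatorial parameters of the pending set $\T_t$: the maximum \emph{load} $L=\max_j L_j$, where $L_j=|\{T\in\T_t: T \text{ accesses } S_j\}|$, and the number of colors $C$ produced by the incremental greedy coloring on the conflict graph of $\T_t$. The structural fact I would use throughout is that, by Definition~\ref{def:conflict}, this conflict graph is a union of $s$ cliques, one per shard (for each $S_j$ the $L_j$ transactions touching $S_j$ are pairwise conflicting and form a clique, of size $L_j\le L$), and since each transaction accesses at most $k$ shards it belongs to at most $k$ of these cliques.

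First, I would upper bound the numerator $t'-t=O(dC)$. After the leader fixes the coloring, a transaction of color $c$ commits only after every lower-colored transaction it conflicts with has committed, so by induction on $c$ a color-$c$ transaction finalizes within $O(cd)$ time: each step along a conflict chain is one round trip in which the destination shards vote to the leader and receive back a confirmation, of length $O(d)$, and there are at most $c$ dependent steps. Since the maximum color is $C$, every transaction in $\T_t$ finalizes by $t'-t=O(Cd)$. Here I would also invoke the timestamp-cancellation rule (Line~\ref{alg1:check-timestamp}) together with the fact that a newly arriving transaction never gets a color below the pending old ones, so arrivals after $t$ do not inflate the finalization time of $\T_t$.

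Second, I would lower bound the optimum by $\tau^*=\Omega(L)$: the $L$ transactions accessing the most loaded shard are pairwise conflicting, hence must commit in $L$ distinct, temporally separated slots in \emph{any} schedule, forcing $\tau^*\ge L$. The heart of the argument is the coloring bound $C=O(\min\{k,\sqrt{s}\}\cdot L)$, which I would prove in two independent ways. For the $k$ side, any transaction has degree at most $\sum_{j\in\text{shards}(T)}(L_j-1)\le k(L-1)$, so greedy uses $C\le\Delta+1=O(kL)$ colors. For the $\sqrt{s}$ side, I would use that whenever greedy produces $C$ colors there is at least one edge between every pair of color classes (the first vertex receiving the larger color already had a neighbor of the smaller color), so $|E|\ge\binom{C}{2}$ and hence $C=O(\sqrt{|E|})$; bounding $|E|\le\sum_j\binom{L_j}{2}\le\tfrac12\sum_j L_j^2\le\tfrac{L}{2}\sum_j L_j\le\tfrac12 sL^2$ (using $L_j\le L$ and $\sum_j L_j\le sL$) yields $C=O(\sqrt{s}\,L)$. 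Combining the two gives $C=O(\min\{k,\sqrt{s}\}\,L)$.

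Putting the pieces together yields $r=(t'-t)/\tau^*=O(dC)/\Omega(L)=O\big(d\cdot\min\{k,\sqrt{s}\}\big)$; note that the bound $\tau^*\ge L$ already cancels the $L$ in $C$, so no separate $\tau^*=\Omega(d)$ estimate is needed. I expect the main obstacle to be the $\sqrt{s}$ half of the coloring bound: the naive degree estimate only gives $O(kL)$ and degenerates when $k\gg\sqrt{s}$, so the edge-counting argument — both the greedy lower bound $|E|\ge\binom{C}{2}$ and the clique-based upper bound on $|E|$ — is what makes the $\sqrt{s}$ term go through. A secondary subtlety worth care is justifying $t'-t=O(dC)$ in the dynamic, partially synchronous setting, specifically that prioritizing older transactions prevents later arrivals from delaying the finalization of $\T_t$.
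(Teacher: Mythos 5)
Your proposal is correct, and it shares the paper's overall frame: bound the algorithm's finalization time by $O(d\cdot C)$ where $C$ is the number of greedy colors, lower-bound the optimum by the maximum shard load $L$ (the paper's $l$), and reduce everything to the coloring bound $C = O(\min\{k,\sqrt{s}\}\,L)$; the $k$ side (degree at most $k(L-1)$, hence $C\le \Delta+1$) is identical in both. Where you genuinely diverge is the $\sqrt{s}$ side. The paper partitions the pending set into $A$ (transactions accessing at most $\sqrt{s}$ shards) and $B$ (accessing more than $\sqrt{s}$ shards), colors $A$ with at most $L\sqrt{s}+1$ colors, serializes $B$ with $|B|$ colors, and then absorbs the $|B|$ term by pigeonhole: some shard is touched by more than $|B|/\sqrt{s}$ transactions of $B$, so $|B| < L\sqrt{s}$. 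You instead combine two structural facts: a greedy coloring always leaves an edge between every pair of color classes, so $\binom{C}{2}\le |E|$, and the conflict graph is a union of per-shard cliques, so $|E|\le \sum_j \binom{L_j}{2} \le \frac{1}{2}sL^2$, giving $C = O(\sqrt{s}\,L)$ directly. Your route buys something concrete in rigor: the paper's $A\cup B$ argument exhibits \emph{some} good coloring of $G_{\T}$, but the algorithm runs greedy coloring, which in general need not use as few colors as an exhibited coloring; the paper's step ``can be colored with at most $L\sqrt{s}+1+|B|$ colors, which implies a schedule of length $O(d(L\sqrt{s}+|B|))$'' implicitly needs exactly the kind of guarantee your edge-counting inequality supplies, so your argument closes that gap. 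What the paper's decomposition buys in exchange is transparency about which transactions force serialization (the high-fan-out set $B$) without any edge counting. Finally, both writeups treat the dynamic aspect informally --- that later arrivals, the fairness floor, and the cancellation rule do not inflate the number of distinct colors consumed by $\T_t$ beyond the static greedy bound on $G_{\T_t}$; you at least flag this explicitly as the remaining point needing care, which is fair, since it is also the weakest point of the paper's own proof.
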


\begin{proof}
    Consider a set of transactions $\T$ generated at or before time $t$ that are still pending (neither committed nor aborted) at time $t$.
Let $G_{\T}$ denote the conflict graph for $\T$,
where two transactions conflict if they have a common destination shard.
Since we use greedy coloring to color $G_{\T}$,
the number of distinct colors assigned to the transactions in $\T$ depends only on the coloring of $G_{\T}$, and not on the colors of the transactions that have been finalized (committed or aborted) before $t$. (This holds since transactions in $\T$ may use smaller 
 colors of transactions committed before $t$.)

Let $l_i$ denote the number of transactions in $\T$ that use objects in shard $S_i$.
Let $l = \max l_i$. 
We have that $l$ is a lower bound on the time 
that it takes to finalize (commit or abort) the transactions in $\T$, since at least $l$ subtransactions need to serialize in a destination shard. 

First, consider the case where $k \leq \sqrt s$.
We have that each transaction conflicts with at most $k l$
other transactions. Hence $G_{\T}$ can be colored with
at most $k l + 1$ colors.
The distance between a transaction (home shard) and its accessing objects(destination shards) is at most $d$ away, and to commit subtransactions after being scheduled, Algorithm \ref{alg:stateless-single-leader-scheduler} takes $3$ steps of interactions (for each color) between the leader shard and the destination shard. This means each color corresponds to the $3d$ time units.
Thus, it takes at most $(kl+1)3d = O(kld)$ time units to confirm and commit the transactions.
Hence,
for transactions $\T$,
the approximation of their finalization time 
is $O(kld / l) = O(kd)$.

Next, consider the case $k > \sqrt s$.
We can write $\T' = A \cup B$,
where $A$ are the transactions which access at most $\sqrt s$
destination shards, while $B$ are the transactions which access more than $\sqrt s$ destination shards.
Each transaction in $A$ conflicts with at most $l \sqrt s$ other transactions. Hence, the transactions in $A$ need at most $l \sqrt s + 1$ distinct colors.
The transactions in $B$ can be serialized, requiring at most $|B|$ distinct colors.
Hence, the conflict graph $G_T$ can be colored with at most
$l \sqrt s + 1 + |B|$ colors, which implies a schedule of 
length $O(d (l \sqrt s +  |B|))$ steps to finalize the transactions $\T$.
Since each transaction in $B$ accesses more than $\sqrt s$ shards,
there is a shard accessed by more than $(|B| \sqrt s) / s = |B| / \sqrt s$ transactions. 
Thus, $l > |B| / \sqrt s$.
Hence,
for transactions $\T$,
the approximation of their finalization time is 
$O(d(l \sqrt s +  |B|)/ l) = O(d \sqrt s + d |B| / l) = O(d \sqrt s +  d \sqrt s) = O(d \sqrt s)$.

Therefore, combining the approximations for the cases $k \leq \sqrt s$ and $k > \sqrt s$,
we have that the combined approximation for the finalization time for $\T$ is $O(d \cdot \min\{k,\sqrt s\})$.
Since $t$ is chosen arbitrarily,
we have that the competitive ratio of Algorithm~\ref{alg:stateless-single-leader-scheduler} is $O(d \cdot \min\{k,\sqrt s\})$.
\end{proof}

Suppose that shards are connected in a clique graph with unit distance, where every shard is connected to every other shard with unit distance. So in this case $d=1$. Then from Theorem~\ref{theorem:stateless-general-graph-competative-ratio}, Algorithm~\ref{alg:stateless-single-leader-scheduler} has an $O(\min\{k, \sqrt{s}\})$ competitive ratio for a clique graph with unit distance. Thus, we have:
\begin{corollary}[Unit Distance Clique]
\label{cor:unit_clique}
 Algorithm \ref{alg:stateless-single-leader-scheduler} has an $O(\min\{k, \sqrt{s}\})$ competitive ratio for a clique graph with unit distance.
\end{corollary}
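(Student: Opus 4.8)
The final statement to prove is Corollary~\ref{cor:unit_clique}, which specializes Theorem~\ref{theorem:stateless-general-graph-competative-ratio} to the unit-distance clique.

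The plan is to observe that this corollary is an immediate instantiation of the general-graph theorem, so the work is essentially bookkeeping rather than a fresh argument. First I would note that in a clique with unit edge weights, every pair of shards communicates directly across an edge of weight one. Consequently the worst-case distance between any transaction's home shard, its destination shards, and the leader shard is exactly one hop, which means $d = 1$ under the definition of $d$ used throughout this section (the maximum distance among all involved shards, including the leader in the single-leader case).

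Next I would simply substitute $d = 1$ into the bound established in Theorem~\ref{theorem:stateless-general-graph-competative-ratio}. That theorem guarantees a competitive ratio of $O(d \cdot \min\{k, \sqrt{s}\})$ for Algorithm~\ref{alg:stateless-single-leader-scheduler} on any general graph where the relevant shards lie within distance $d$. Setting $d = 1$ collapses the leading factor and yields a competitive ratio of $O(\min\{k, \sqrt{s}\})$, which is exactly the claimed bound. No reworking of the coloring argument or the case analysis on $k \leq \sqrt{s}$ versus $k > \sqrt{s}$ is needed, since those steps hold for arbitrary $d$ and we are only fixing the value of the distance parameter.

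I do not anticipate a genuine obstacle here, as the corollary is a direct corollary in the literal sense. The only point requiring a word of care is the justification that $d$ really equals $1$ in the unit-distance clique: one must confirm that the definition of $d$ in this single-leader setting accounts for the leader's position, and that in a unit-weight clique even the home-to-leader and leader-to-destination hops cost one time unit each. Once that observation is recorded, the conclusion follows by plugging in, so the proof amounts to a single sentence of specialization.
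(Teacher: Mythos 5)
Your proposal is correct and matches the paper's own argument exactly: the paper also notes that in a unit-distance clique $d=1$ (including the leader's position) and then substitutes this into the $O(d \cdot \min\{k,\sqrt{s}\})$ bound of Theorem~\ref{theorem:stateless-general-graph-competative-ratio}. Your added care in checking that the definition of $d$ covers the home-to-leader and leader-to-destination hops is a sound observation, but it does not change the substance of the one-line specialization.
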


We continue to show that it is an NP-hard problem to approximate the optimal transaction schedule.
Thus, the provided bound in Corollary \ref{cor:unit_clique},
is the best we can do with a polynomial time scheduling algorithm.
The result below applies to both the stateful and stateless model.

\begin{theorem}
For all $\epsilon > 0$, it is an NP-hard problem to produce a transaction schedule that achieves a competitive ratio $(\min\{k, \sqrt{s}\})^{1 - \epsilon}$.
\end{theorem}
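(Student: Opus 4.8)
The plan is to reduce from the hardness of approximating graph coloring. It is known (Feige--Kilian, H{\aa}stad, and Zuckerman) that for every $\epsilon>0$ it is NP-hard to approximate the chromatic number $\chi(H)$ of an $N$-vertex graph $H$ within a factor $N^{1-\epsilon}$. My first step is to make precise that, in the unit-distance clique setting of Corollary~\ref{cor:unit_clique} (so $d=1$, and the argument is identical for the stateless and stateful models), computing a good schedule for a batch $\T$ is equivalent, up to the fixed constant overhead of the commit protocol, to vertex-coloring the conflict graph $G_{\T}$: mutually non-conflicting transactions can be finalized in a single time slot while conflicting ones must be placed in distinct slots, so the optimal finalization time satisfies $\tau^*=\Theta(\chi(G_{\T}))$, and any schedule using $C$ color classes finalizes in $\Theta(C)$ steps. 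Consequently, a schedule with competitive ratio $\rho$ is, up to constants, a $\rho$-approximate coloring of $G_{\T}$.

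Next I would give a gadget realizing an arbitrary graph $H=(V,E)$ as a conflict graph. Introduce one shard $S_e$ for each edge $e\in E$ and one transaction $T_v$ for each vertex $v\in V$, and let $T_v$ access exactly the shards $\{S_e : v\in e\}$. Then $T_u$ and $T_v$ have a common destination shard precisely when $uv\in E$, so the conflict graph of the constructed instance is exactly $H$. The instance has $N=|V|$ transactions, $s=|E|$ shards, and maximum access $k=\Delta(H)$; in particular $\min\{k,\sqrt{s}\}\le k\le N$, hence $(\min\{k,\sqrt{s}\})^{1-\epsilon}\le N^{1-\epsilon}$.

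Combining the two steps yields the reduction: if some polynomial-time scheduler achieved competitive ratio $(\min\{k,\sqrt{s}\})^{1-\epsilon}$, then on the instance built from $H$ it would output a coloring of $H$ with at most $(\min\{k,\sqrt{s}\})^{1-\epsilon}\cdot\chi(H)\le N^{1-\epsilon}\cdot\chi(H)$ colors, i.e. an $N^{1-\epsilon}$-approximation of $\chi(H)$ for an arbitrary input graph $H$ (every graph arises this way), contradicting the coloring hardness with the choice $\epsilon'=\epsilon$. This already establishes the theorem; to additionally certify that the bound is tight against the $O(\min\{k,\sqrt{s}\})$ upper bound, I would run the reduction on the dense hard instances of the coloring result, where the ``no'' case forces $\chi(H)=N^{\Omega(1)}$ and hence (since $\chi\le\Delta+1$) $\Delta=\Theta(N)$, and then pad each transaction with $\Theta(N)$ private, non-shared dummy accounts so that $s=\Theta(N^2)$ without creating any new conflict, giving $k=\Theta(N)$ and $\sqrt{s}=\Theta(N)$ and thus $\min\{k,\sqrt{s}\}=\Theta(N)$ simultaneously.

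The step requiring the most care is this last parameter bookkeeping: I must verify that inflating $k$ and $s$ by adding non-conflicting dummy accounts neither changes the conflict graph $G_{\T}$ (so that $\chi$, and therefore $\tau^*$, is preserved up to a constant) nor disturbs the per-slot $O(d)=O(1)$ commit cost, so that $\min\{k,\sqrt{s}\}$ rather than merely $N$ is the quantity governing the inapproximability. The equivalence $\tau^*=\Theta(\chi(G_{\T}))$ is the other point to argue carefully, since the competitive ratio is defined as a supremum over the volatile transaction history; here it suffices to instantiate the definition on the single static batch produced by the reduction, for which the optimal offline time is the chromatic number up to the commit-protocol constant.
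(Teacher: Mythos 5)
Your proposal is correct and follows essentially the same route as the paper's proof: a reduction from the NP-hardness of approximating the chromatic number of an $N$-vertex graph within $N^{1-\epsilon}$, using the identical gadget of one unit-distance-clique shard per edge of $H$ and one transaction per vertex, so that the conflict graph of the instance is isomorphic to $H$ and schedule length corresponds to the number of colors up to the $O(1)$ per-slot commit cost. The differences are only presentational: the paper fixes $n = \min\{k,\sqrt{s}\}$ up front rather than bounding $\min\{k,\sqrt{s}\} \le N$ afterwards, and your dummy-account padding to force $\min\{k,\sqrt{s}\} = \Theta(N)$ is a harmless extra refinement that the paper leaves implicit.
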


\begin{proof}
We will use a reduction from vertex coloring.
For all $\epsilon > 0$, the problem of approximating the chromatic number of a graph with $n$ nodes within a factor $n^{1-\epsilon}$ is NP-hard~\cite{10.1145/1132516.1132612}.

Consider an instance of vertex coloring on a graph $H = (V_H, E_H)$ with $n$ nodes.
We can transform the vertex coloring instance $H$ to a scheduling problem instance on a graph shard $G_s$ with $s = |E_H|$ shards,
such that
$G_s$ is a synchronous clique with unit distances between the shards.
Furthermore, each edge of $E_H$ corresponds to a unique node of $G_s$.

Let $\T$ be a set of $n$ transactions,
all  generated concurrently at time $t = 0$,
such that each node $v_i \in V_H$ is mapped to 
transaction $T_i \in \T$.
For each edge $(v_i,v_j) \in E_H$ we create a conflict between respective transactions $T_i$ and $T_j$ by making the transactions access a common object in the unique shard of $G_s$ that corresponds to edge $(v_i,v_j)$. 
Let $G_{\T}$ be the respective conflict graph for the transactions $\T$.
The conflict graph $G_{\T}$ is isomorphic to $H$.

A correct execution schedule for $\T$ (which gives a valid serialization of the transactions in $\T$)
can be represented as a DAG where nodes are transactions and transaction $T_i$ points to $T_j$ if they conflict and $T_i$ executes first in the respective common destination shard with $T_j$.
Then, a layering of the DAG nodes starting from source nodes provides a unique time step for each transaction, so that conflicting transactions receive different time steps.
Thus, an execution schedule of the transactions in $\T$
gives a valid vertex coloring of the nodes in $G_{\T}$ which provides a valid coloring for $H$.
The best length of the transaction schedule given from the DAG, is equal to the number of colors that can be assigned to $H$.

Since $|E_H| \leq n(n-1)/2$, we have that $s = O(n^2)$.
Each transaction conflicts with at most $k \leq n-1$ other transactions.
Therefore, given $k$ and $s$, we can create the reduction from graph coloring for $n = \min(k,\sqrt s)$.
Consequently, the NP-hardness of the scheduling problem in $G_s$ follows from the NP-hardness of the reduced graph coloring problem with $n = \min(k, \sqrt s)$. 
\end{proof}

\subsection{Multi-Leader Scheduler}
\label{sec:stateless-multi-leader-scheduler}
This section provides the multi-leader scheduler where multiple leaders schedule and process the transactions, distribute the congestion, and load across different leaders.
The multi-leader approach allows adaptation to the value $d$ without requiring knowledge of $d$.
Also, here the value $d$ depends only on the maximum distance between the home and destination shards (without involving distances to the leaders). Therefore, 
the value of $d$ captures better the locality of the transactions,
and the resulting schedule allows for shorter messages between home and destination shards.
The concepts that we introduce for this algorithm will play a key role for the development of the stateless multi-leader algorithm.

\subsubsection{Shard Clustering}
\label{sec:shard-clustering}
In the multi-leader scheduler, shards are distributed across the network, and the distance between the home shard of the transaction and its accessing objects (destination shards) ranges from $1$ to $D$, where $D$ is the diameter of the shard graph. Let us suppose shards graph $G_s$ constructed with $s$ shards,
where the weights of edges between shards denote the distances between them.
We consider that $G_s$ is known to all the shards.
We define $z$-neighborhood of shard $S_i$ as the set of shards within a distance of at most $z$ from $S_i$. Moreover, the 0-neighborhood of shard $S_i$ is the $S_i$ itself.

We consider that our multi-leader scheduling algorithm uses a hierarchical decomposition of $G_s$
which is known to all the shards and calculated before the algorithm starts.
This shard clustering (graph decomposition) is based on the clustering techniques in \cite{gupta2006oblivious} and which were later used in \cite{sharma2014distributed,busch2022dynamic,adhikari2024spaastable}.
We divide the shard graph $G_s$ into the hierarchy of clusters with $H_1 = \lceil \log D \rceil +1$ layers (logarithms are in base 2), and a layer is a set of clusters, and a cluster is a set of shards. 
Layer $q$, where $0\leq q < H_1$, is a sparse cover of $G_s$
such that:
\begin{itemize}
    \item Every cluster of layer $q$ has (strong) diameter of at most $O(2^q\log s)$.
    \item Every shard participates in no more than $O(\log s)$ different clusters at layer $q$.
    \item For each shard $S_i$ there exists a cluster at layer $q$ which contains the $(2^q-1)$-neighborhood of $S_i$ within that cluster.
\end{itemize}
 
For each layer $q$,
the sparse cover construction in \cite{gupta2006oblivious}
is actually obtained as a collection of $H_2 = O(\log s)$ partitions of $G_s$. 
These $H_2$ partitions are ordered as sub-layers of layer $q$ labeled from $0$ to $H_2-1$.
A shard might participate in all $H_2$ sub-layers but potentially belongs to a different cluster at each sub-layer.
At least one of these $H_2$ clusters at layer $q$ contains the whole $2^q-1$ neighborhood of $S_i$.

In each cluster at layer $q$, a leader shard $S_\ell$ is specifically designated such that the leader’s $(2^q-1)$-neighborhood is in that cluster.
As we give an idea of layers and sub-layers, we define the concept of height as a tuple $h= (h_1, h_2)$, where $h_1$ denotes the layer and $h_2$ denotes the sub-layer. Similar to \cite{sharma2014distributed,busch2022dynamic,adhikari2024spaastable}, heights follow lexicographic order.

The {\em home cluster} for each transaction $T_i$ is defined as follows: suppose $S_i$ is the home shard of $T_i$,
and $z$ is the maximum distance from $S_i$ 
to the destination shards that will be accessed by $T_i$;
the home cluster of $T_i$ is the lowest-layer (and lowest sub-layer) cluster 
in the hierarchy that contains $z$-neighborhood of $S_i$.
Each home cluster consists of one dedicated leader shard, which will handle all the transactions that have their home shard in that cluster (i.e., transaction information will be sent from the home shard to the cluster leader shard to determine the schedule).

\begin{figure*}[t]
    \centering
    \includegraphics[width=0.99\textwidth]{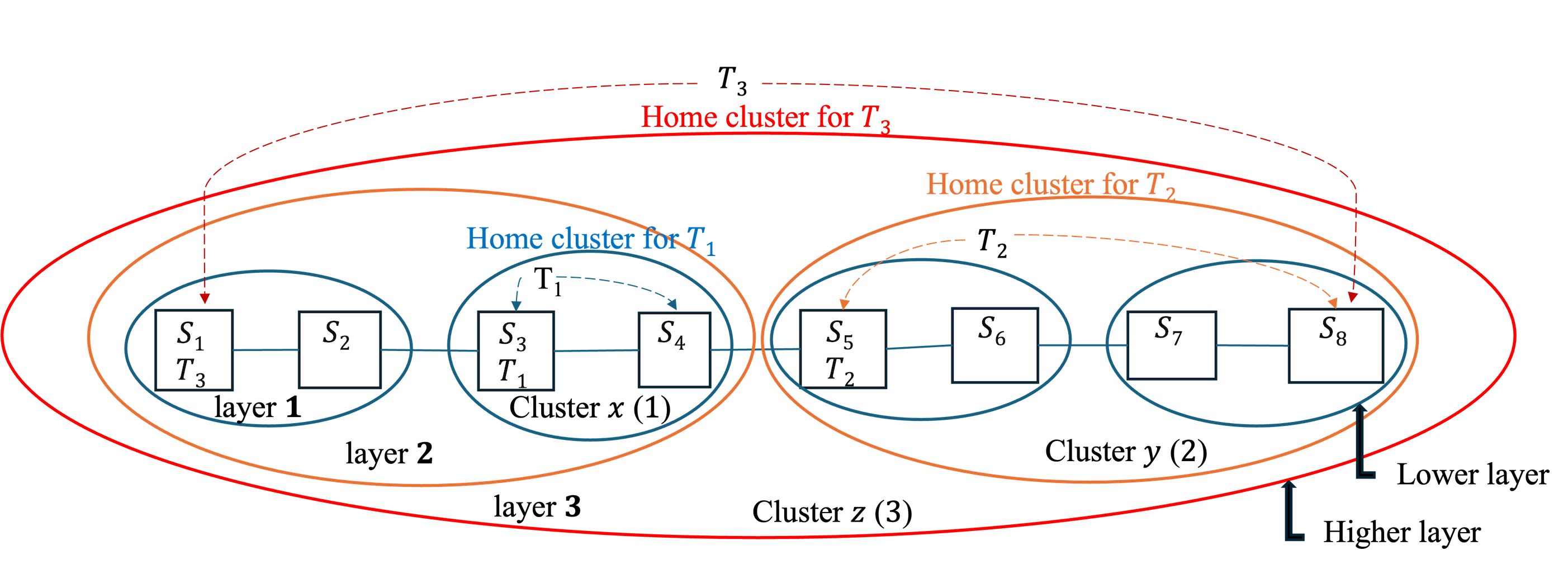}

\caption{Simple example of cluster decomposition of shard graph $G_s$. }
\label{fig:hierarchical-clustering}
\end{figure*}
Figure~\ref{fig:hierarchical-clustering} shows an example of hierarchical clustering, 
assuming shards are connected as if they are in a line, 
where edges in the line have low weights and edges not in the line have large weights.
(We omit the sublayers to  simplify the example.) 
Transaction $T_1$ resides in shard $S_3$ and has home cluster {\em x} at layer $1$. The reason for the home cluster $x$ selection is that $T_1$ accesses an object in  $S_3$ and $S_4$, and both of them are in cluster $x$, and $x$ is the lowest layer cluster including $S_3$ and $S_4$.
Similarly, suppose transaction $T_2$, which resides in $S_5$, has home cluster $y$ at layer $2$, because $T_2$ accesses an object in $S_5$ and $S_8$, and $y$ is the lowest layer cluster that includes both $S_5$ and $S_8$. Similarly, $T_3$ has home cluster $z$ at layer $3$.

\subsubsection{Stateless Multi-Leader Scheduler}

We consider a hierarchical clustering of the shard graph $G_s$, which is assumed to be globally known by all shards. Each cluster $C$ in this hierarchy is characterized by a unique height $(q, r)$ which corresponds to its layer $q$ and sublayer $r$, and each cluster $C$ has its designated leader shard $S_\ell$. The leader shard is responsible for scheduling and coordinating the processing of all transactions whose home cluster is $C$. Each home shard $S_i$ maintains a local timestamp $ts$ to tag newly generated transactions. Additionally, each destination shard $S_j$ maintains a local scheduling queue $sch_{dq}$ and lexicographically orders for the incoming subtransactions using the tuple $(ts, q, r, color)$, where $color$ is an integer assigned to the transaction by the leader shard $S_\ell$ through vertex coloring. Algorithm~\ref{alg:stateless-multi-leader-scheduler} invokes Algorithm~\ref{alg:stateless-single-leader-scheduler} in each cluster $C$ to process their transactions.

Algorithm~\ref{alg:stateless-multi-leader-scheduler} works in a partially synchronous model and follows an event-driven execution by message passing. When a new transaction $T_i$ is generated at its home shard $S_i$, then the home shard $S_i$ determines the lowest cluster $C$ at height $(q, r)$ that includes both $S_i$ and all of the destination shards accessed by $ T_i$. Moreover, the transaction is tagged with its local timestamp $ts$, along with the cluster identifiers $q$ and $r$, and is then sent to the cluster’s leader shard $S_\ell$. 

Upon receiving new transaction(s) $T_{i}$, the leader shard $S_\ell$ of cluster $C$ invokes Algorithm~\ref{alg:stateless-single-leader-scheduler} to process their transactions, where leader shard $S_\ell$ adds  $T_{i}$ to the transaction set $\T_C$ of cluster $C$ and updates the corresponding transaction conflict graph $G_{\T_C}$ to incorporate the new transaction  $T_{i}$. Then the leader shard used an incremental greedy vertex coloring algorithm~\cite{busch2022dynamic} to assign a color only to the newly received transaction without affecting already colored (scheduled) transactions. Once colored, the transaction is split into subtransactions $T_{i,j}$, and sent to the respective destination shard $S_j$.

Since multiple leader shards process their transactions concurrently by invoking the Algorithm~\ref{alg:stateless-single-leader-scheduler}, destination shards may receive the subtransactions from different clusters simultaneously. To handle this, we modify the parameters and processing technique of Algorithm~\ref{alg:stateless-single-leader-scheduler} as follows:  each destination shard $S_j$ maintains a scheduled subtransactions queue $sch_{qd}$, which is ordered lexicographically by the tuple $(ts,q,r,color)$. The additional parameters $(ts,q,r)$ denote the timestamp $ts$, and hierarchical cluster heights (layers $q$ and sublayers $r$) in the shard graph $G_s$.  Moreover, each destination shard $S_j$ processes its subtransactions from the head of $sch_{dq}$ following the steps in Algorithm~\ref{alg:stateless-single-leader-scheduler} with the modified ordering criteria.

\begin{algorithm*}[t]
\small 
\caption{\sc Stateless Multi-Leader Scheduler}
\label{alg:stateless-multi-leader-scheduler}
Assume all shards know a hierarchical cluster decomposition of $G_s$\;
Each cluster $C$ is associated with a unique height $(q, r)$ and has a designated leader shard $S_\ell$\;
Each shard $S_j$ maintains a lexicographically ordered queue $sch_{dq}$ for subtransactions\;

\BlankLine
\SetKwBlock{OnTxnGenerated}
{\normalfont {\bf Upon generation of a new transaction $T_{i}$ at home shard $S_i$}}{}
\OnTxnGenerated{
    $S_i$ tags a local timestamp (ts) to $T_i$ and identifies the destination shards accessed by $T_i$\;
     $S_i$ selects the lowest cluster $C$ with height $(q, r)$ that contains $T_i$ and all its destination shards\;
    $S_i$ sends $T_i$ to the leader shard $S_\ell$ of cluster $C$\;
}

\BlankLine
\SetKwBlock{OnTxnReceivedAtLeader}
{\normalfont {\bf Upon receiving transaction $T_i$ at the leader shard $S_\ell$ of cluster $C$}}{}
\OnTxnReceivedAtLeader{
    The leader shard $S_\ell$ of each cluster $C$ invokes Algorithm~\ref{alg:stateless-single-leader-scheduler} to schedule and process their transactions. This means each cluster $C$ invokes Algorithm~\ref{alg:stateless-single-leader-scheduler} to process their transactions\;

   \tcp{Since multiple clusters process their transactions concurrently, each with its own leader, destination shards may receive subtransactions from different clusters simultaneously.} 
   \BlankLine
   \SetKwBlock{MultiLeaderHandling}
{\normalfont {\bf To handle subtransactions from multiple clusters (leaders):}}{}

\MultiLeaderHandling{
    Each destination shard $S_j$ maintains a scheduled subtransactions queue $sch_{dq}$ ordered lexicographically by the tuple $(ts, q, r, color)$.
    The additional parameters $(ts,q, r)$ reflect the hierarchical cluster heights (layers and sublayers) in the shard graph $G_s$\;
    
    Each destination shard $S_j$ processes their subtransactions from the head of $sch_{dq}$ following the rules in Algorithm~\ref{alg:stateless-single-leader-scheduler}, with the modified ordering criteria\;
}
}
\end{algorithm*}

Additionally, if the destination shard is busy and receives a new subtransaction $T_{i',j}$ such that $ts(T_{i',j}) < ts(T_{i,j})$ in lexicographic order, this means $T_{i',j}$ has a higher priority where $T_{i,j}$ is the currently processed (but not committed) subtransaction, then the shard give priority to $T_{i',j}$ by sending an \emph{ignore} $T_{i,j}$ message to its leader, indicating that a higher-priority transaction (subtransaction $T_{i',j}$) should proceed first. Then, when the leader receives an \emph{ignore} $T_{i,j}$ message for a subtransaction $T_{i,j}$ and the decision for $T_i$ has not yet been made (i.e., not all votes have been received), the leader discards the vote from $S_j$ and replies with an \emph{ignored} $T_{i,j}$ message to the destination shard $S_j$. If the decision has already been made (i.e, confirm commit or confirm abort) by the leader shard, then no further action is taken for particular subtransaction $T_{i,j}$ at the leader shard $S_\ell$. Then, if the destination shard $S_j$ receives an \emph{ignored} message for $T_{i,j}$, then it reinserts $T_{i,j}$ into the scheduled queue, reorders the queue lexicographically, and resumes processing from the head.

Finally, when the home shard $S_i$ receives the final outcome of its transaction $T_i$, it generates a new transaction and sends it to the corresponding cluster leader shard, and the process repeats. This multi-leader scheduling framework ensures conflict-free and consistent execution by leveraging lexicographic ordering over the tuple $(ts, q, r, color)$, and maintains the fairness and parallelism across shards in the presence of partial synchrony.


\subsubsection{Correctness Analysis of Stateless Multi-Leader Scheduler (Algorithm~\ref{alg:stateless-multi-leader-scheduler})}
\label{app:safety-and-liveness-of-stateless-multi-leader-scheduler}

\begin{lemma}[Safety]
\label{lemma:stateless-multi-leader-safety}
   If two transactions conflict with each other in Algorithm~\ref{alg:stateless-multi-leader-scheduler}, then they will commit in different time slots, and the local chain produced by Algorithm~\ref{alg:stateless-multi-leader-scheduler} ensures blockchain serialization.
\end{lemma}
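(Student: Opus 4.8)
The plan is to lift the single-leader safety argument (Lemma~\ref{lemma:stateless-single-leader-safety}) to the multi-leader setting, where the central difficulty is that two conflicting transactions may be colored by two \emph{different} leaders that never communicate. The unifying idea is that every subtransaction of a transaction $T_i$ carries the \emph{same} global ordering tuple $(ts_i, q_i, r_i, color_i)$---the home-shard timestamp, the home-cluster height $(q_i, r_i)$, and the color assigned by that cluster's leader---and that every destination shard orders its queue $sch_{dq}$ lexicographically by this tuple. I would therefore reduce safety to two facts: (i) the tuples of two conflicting transactions are always distinct, and (ii) the induced order is identical at every shared destination shard.

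First I would establish the distinctness in (i) by a case analysis on the home-cluster heights of the conflicting transactions $T_i$ and $T_j$, which by Definition~\ref{def:conflict} share a common destination shard $S_k$. If $T_i$ and $T_j$ have the same height $(q,r)$, I would use the fact that the clusters at a fixed height form a \emph{partition} of $G_s$ (each height $(q,r)$ is one of the $H_2$ partitions of layer $q$): since both home clusters must contain $S_k$, the partition property forces $T_i$ and $T_j$ into the \emph{same} cluster, so they are handled by a single leader running incremental greedy coloring on $G_{\T_C}$; sharing an edge, they receive distinct colors and the tuples differ in the $color$ coordinate. This is precisely the per-cluster invocation of Algorithm~\ref{alg:stateless-single-leader-scheduler}, so Lemma~\ref{lemma:stateless-single-leader-safety} applies verbatim. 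Otherwise the heights differ and the tuples already differ in the $(q,r)$ coordinate. Either way the tuples are distinct, so $T_i$ and $T_j$ occupy different positions in every shared queue and commit in different time slots. For consistency (ii), I would observe that the tuple is a property of the transaction rather than of the particular subtransaction or destination shard, so comparing $(ts_i,q_i,r_i,color_i)$ with $(ts_j,q_j,r_j,color_j)$ yields the same lexicographic verdict at $S_k$ and at every other shard the two transactions share; hence all destination shards agree on the relative serialization order of any conflicting pair, which is exactly the blockchain serialization requirement.

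The hard part will be arguing that the \emph{processing} order actually realizes this lexicographic order under partial synchrony, since a destination shard may start a lower-priority subtransaction before a higher-priority one (from a different cluster) arrives. Here I would invoke the \emph{ignore}/\emph{ignored} mechanism of Algorithm~\ref{alg:stateless-multi-leader-scheduler}: a busy shard receiving an incoming subtransaction with a lexicographically smaller tuple preempts the in-progress one by sending an \emph{ignore} message to its leader (which discards the premature vote if no decision has yet been reached) and reinserting the displaced subtransaction into $sch_{dq}$ after reordering. I would then argue by induction on commit events at a fixed shard that after each commit the head of the queue is the globally smallest uncommitted tuple, so commits occur in strictly increasing tuple order; combining this with (i) and (ii) yields that conflicting transactions commit in different, globally consistent time slots and that the resulting local chains serialize. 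A subtlety I would address carefully is the boundary case where a leader has already issued a \emph{confirmed commit}/\emph{confirmed abort} before the \emph{ignore} arrives: in that case no preemption occurs, and I would verify that the decision could only have been reached after all higher-priority (smaller-tuple) conflicting subtransactions at $S_k$ had themselves been finalized, so the invariant is preserved.
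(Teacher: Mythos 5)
Your proof follows the same skeleton as the paper's: intra-cluster conflicts are handled by the incremental greedy coloring of $G_{\T_C}$ exactly as in Lemma~\ref{lemma:stateless-single-leader-safety}, and cross-cluster conflicts are handled by the fact that every destination shard orders $sch_{dq}$ by the same transaction-level tuple $(ts,q,r,color)$, so all shards agree on relative order. You go beyond the paper in step (i): the paper never actually explains why two conflicting transactions scheduled by \emph{different} leaders receive distinct, consistently ordered tuples, whereas your observation that each sublayer of the decomposition is a partition of $G_s$---so two home clusters at the same height $(q,r)$ that both contain the common destination shard $S_k$ must be the \emph{same} cluster, reducing that case to the single-leader lemma---is precisely the missing case analysis, and it is correct given the description in Section~\ref{sec:shard-clustering} (a home cluster contains the $z$-neighborhood of the home shard, hence all destination shards).

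There is, however, a genuine flaw in your final paragraph. The invariant you propose---that after each commit the head of $sch_{dq}$ is the globally smallest uncommitted tuple, so commits at a fixed shard occur in strictly increasing tuple order---is false under partial synchrony, and your patch for the boundary case is false as well. Concretely: let $S_k$ vote for $T_{i,k}$, let the leader $\ell_i$ collect all votes and issue \emph{confirmed commit}, and only then let the smaller-tuple $T_{j,k}$ arrive at $S_k$ (or let its \emph{ignore} reach $\ell_i$ after the decision). Then $T_i$ commits at $S_k$ before $T_j$ although $T_j$'s tuple is smaller, so strict tuple order is violated; moreover, $\ell_i$'s decision was reached while $T_j$ was still pending, contradicting your claim that the decision ``could only have been reached after all higher-priority conflicting subtransactions had been finalized.'' What safety actually needs, and what does hold, is weaker: pairwise \emph{consistency} of the commit order, not fidelity to the lexicographic order. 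The correct argument keys on the leaders' decision events, which are globally ordered: if $\ell_i$'s decision precedes every \emph{ignore} triggered by $T_j$, then every shared shard had already voted for $T_i$ and, being busy, commits $T_i$ before processing $T_j$, so all shared shards order $T_i$ first; if instead some \emph{ignore} reaches $\ell_i$ before its decision, the corresponding vote is discarded and $\ell_i$ cannot decide until that shard re-votes, which happens only after $T_j$ finalizes, so all shared shards order $T_j$ first. Since the case split depends only on the single decision event at $\ell_i$ and not on the identity of the shard, the relative order of every conflicting pair is the same in every shared local chain, which is what serialization requires. Replacing your monotonicity invariant with this pairwise argument repairs the proof; note that the paper's own (very terse) proof does not address this race at all.
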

\begin{proof}
This proof follows the similar reasoning discussed in Lemma~\ref{lemma:stateless-single-leader-safety}, where the leader of each cluster used an incremental greedy vertex coloring algorithm~\cite{busch2022dynamic} to color the transaction conflict graph $G_{\T_C}$ so that conflicting transactions get different colors. 
Moreover, each destination shard $S_j$ maintains a queue $sch_{dq}$ of pending subtransactions lexicographically ordered by the tuple $(ts, q, r, color)$, which is consistent across all destination shards.
Each shard processes subtransactions from the head of the queue, ensuring a consistent order of execution that respects the coloring-based serialization. Thus, conflicting transactions are guaranteed to be processed in separate time slots, and all shards maintain the same lexicographic ordering commit order, which ensures global blockchain serialization.
\end{proof}

\begin{lemma}[Liveness]
\label{lemma:stateless-multi-leader-liveness}
    Algorithm~\ref{alg:stateless-multi-leader-scheduler} guarantees that every generated transaction will eventually be committed or aborted.
\end{lemma}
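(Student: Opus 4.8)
The plan is to reduce multi-leader liveness to the single-leader liveness already established in Lemma~\ref{lemma:stateless-single-leader-liveness}, by showing that the cross-cluster preemption mechanism can only \emph{delay}, never permanently block, any subtransaction. The central observation is that the lexicographic order on the tuple $(ts, q, r, color)$ induces a well-founded total order on subtransactions, with the timestamp $ts$ as the primary key. Since each home shard assigns strictly increasing timestamps and holds at most one transaction at a time, for any fixed subtransaction $T_{i,j}$ only finitely many subtransactions can carry a strictly smaller $(ts,q,r,color)$ key. I would first make this precise, establishing that the set of potential higher-priority competitors of any subtransaction is finite, which is what makes the order suitable for an induction.

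Next I would proceed by induction on this lexicographic order. Within a single cluster $C$, Lemma~\ref{lemma:stateless-single-leader-liveness} already guarantees that the incremental greedy coloring assigns $T_i$ a finite color, that its subtransactions are dispatched, and that—because older (smaller-$ts$) transactions are prioritized and the smallest pending color strictly increases over time—no transaction is starved by later arrivals in its own cluster. The only genuinely new phenomenon in the multi-leader setting is that a busy destination shard $S_j$ processing $T_{i,j}$ may receive a subtransaction $T_{i',j}$ with a strictly smaller key, in which case it emits an \emph{ignore} $T_{i,j}$ message, the leader discards the corresponding vote (when no decision has yet been made), and $T_{i,j}$ is reinserted into $sch_{dq}$. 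I would argue that each such preemption of $T_{i,j}$ can be charged to a distinct higher-priority competitor: once $T_{i',j}$ triggers a preemption it sits ahead of $T_{i,j}$ in $sch_{dq}$ and finalizes before $T_{i,j}$ resumes, so it never preempts $T_{i,j}$ a second time. Combined with the finiteness established above, this shows $T_{i,j}$ can be preempted only finitely often.

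The main obstacle—and the step I would develop most carefully—is ruling out an infinite livelock in which $T_{i,j}$ is repeatedly reinserted and never reaches a stable head position. I would close this using the inductive hypothesis: every subtransaction whose key is strictly smaller than that of $T_{i,j}$ eventually commits or aborts and is therefore permanently removed from every $sch_{dq}$; after the last such competitor finalizes, $T_{i,j}$ rises to the head of $sch_{dq}$ at $S_j$ and is no longer subject to the \emph{ignore} rule. At that point $S_j$ validates $T_{i,j}$ and votes, and by Lemma~\ref{lemma:stateless-single-leader-liveness} applied within the cluster of $T_i$ the leader collects all votes and issues a \emph{confirmed commit} or \emph{confirmed abort}, finalizing $T_i$. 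Finally I would observe that once $T_i$ finalizes, its home shard generates a fresh transaction with a strictly larger timestamp, which preserves the well-foundedness of the order and completes the induction, proving that every generated transaction is eventually committed or aborted.
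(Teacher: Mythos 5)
Your proof is correct and follows the same overall route as the paper's: reduce to the single-leader liveness argument (Lemma~\ref{lemma:stateless-single-leader-liveness}) inside each cluster, and handle cross-cluster interference via the lexicographic order on $(ts,q,r,color)$ together with the ignore/reinsert mechanism. The difference is one of rigor rather than strategy: the paper's proof describes the mechanism (incremental coloring, busy flag, reinsertion of preempted subtransactions) and then simply asserts that ``no transaction is indefinitely blocked,'' whereas you actually prove that assertion --- by observing that the lexicographic order is well-founded, that only finitely many subtransactions can ever carry a strictly smaller key, charging each preemption of $T_{i,j}$ to a distinct higher-priority competitor, and invoking the inductive hypothesis so that after the last smaller-key competitor finalizes, $T_{i,j}$ reaches the head of $sch_{dq}$ and is processed. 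This closes the only genuinely nontrivial gap in the partially synchronous setting (livelock under repeated preemption), which the paper leaves implicit; the cost is a longer argument. Two small caveats on your version: the finiteness of smaller-key competitors tacitly assumes that each home shard's local timestamps advance at some bounded minimum rate (neither you nor the paper states this, but both proofs need it, since clocks across shards are not synchronized), and the cancel/recolor rule of Algorithm~\ref{alg:stateless-single-leader-scheduler} (Line~\ref{alg1:check-timestamp}) means a competitor's subtransaction can be re-sent with a new color, so ``never preempts a second time'' should be weakened to ``preempts at most finitely often'' --- which your finiteness argument already covers.
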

\begin{proof}
This follows the similar reasoning of the proof of Lemma~\ref{lemma:stateless-single-leader-liveness}, where each cluster $C$ constructs and maintains a conflict graph $G_{\T_C}$ and incrementally colors the vertices using a greedy vertex coloring algorithm~\cite{busch2022dynamic}. The coloring is incremental and does not modify the color assignments of previously scheduled old transactions, which prevents starvation of older transactions.
Each destination shard processes the subtransactions in the lexicographic order of $sch_{dq}$ based on the tuple $(ts, q, r, \text{color})$. A `busy` flag at each shard ensures that only one color (i.e., scheduling round) is active at any time. Once all subtransactions of the current color are processed (i.e., either committed or aborted), the shard proceeds to the next color. Moreover, if a subtransaction with an earlier lexicographic order arrives while a later one is being processed (a possibility in partially synchronous settings), it is reinserted into the queue and priority is given to the older transaction appropriately. 
Therefore, the algorithm guarantees that all scheduled transactions eventually reach a decision. No transaction is indefinitely blocked, ensuring that each transaction is eventually either committed or aborted. Hence, Algorithm~\ref{alg:stateless-multi-leader-scheduler} satisfies liveness.
\end{proof}

\begin{corollary}
    From Lemma~\ref{lemma:stateless-multi-leader-safety} and Lemma~\ref{lemma:stateless-multi-leader-liveness}, Algorithm~\ref{alg:stateless-multi-leader-scheduler} ensures the safety and liveness of the transactions.
\end{corollary}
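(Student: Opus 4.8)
The plan is to assemble the corollary directly from the two preceding lemmas, since the statement merely asserts the conjunction of the two correctness guarantees that those lemmas establish separately. First I would recall that correctness of a transaction scheduler decomposes into two orthogonal properties: \emph{safety}, which constrains how conflicting transactions may commit (they must serialize consistently across all shards), and \emph{liveness}, which guarantees that every generated transaction eventually reaches a terminal decision. Since these properties concern different aspects of the same execution, establishing both suffices to certify full correctness of Algorithm~\ref{alg:stateless-multi-leader-scheduler}.

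Next I would invoke Lemma~\ref{lemma:stateless-multi-leader-safety} to obtain the safety half: any two conflicting transactions commit in distinct time slots, and the local chains produced across the destination shards respect a single consistent serialization order, enforced by the lexicographic ordering of $sch_{dq}$ on the tuple $(ts, q, r, color)$ together with the per-color processing at each shard. I would then invoke Lemma~\ref{lemma:stateless-multi-leader-liveness} to obtain the liveness half: the incremental greedy coloring never starves older transactions, the busy flag advances monotonically through colors, and the reinsertion/priority mechanism under partial synchrony ensures that every subtransaction is eventually dequeued and decided, so every transaction terminates in finite time.

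Finally I would observe that both guarantees hold simultaneously for the same execution, since both lemmas are proved for Algorithm~\ref{alg:stateless-multi-leader-scheduler} under identical assumptions; their conjunction is exactly the combined claim of the corollary. The one point worth checking explicitly — and the closest thing to an obstacle here — is that safety and liveness are not in tension: the preemption and color-cancellation steps on which liveness relies operate only on pending (uncommitted) subtransactions and never reorder already-committed conflicting subtransactions, so they cannot retroactively violate the serialization that safety establishes. With this compatibility noted, the corollary follows immediately from the two lemmas, and no further argument is required.
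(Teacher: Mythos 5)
Your proposal is correct and matches the paper exactly: the corollary is stated in the paper without a separate proof, as it follows immediately from the conjunction of Lemma~\ref{lemma:stateless-multi-leader-safety} and Lemma~\ref{lemma:stateless-multi-leader-liveness}, which is precisely your argument. Your added remark on the compatibility of the preemption mechanism with the established serialization is a harmless (and reasonable) extra observation, but no further argument is needed.
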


\subsubsection{Performance Analysis of Multi-Leader Scheduler (Algorithm~\ref{alg:stateless-multi-leader-scheduler})}
The multi-leader scheduler is the extended version of the single-leader scheduler (Algorithm~\ref{alg:stateful-single-leader-scheduler}) while introducing an additional overhead cost due to its shard (hierarchical) clustering structure and comes from the layers and sublayers of the clusters.


\begin{theorem}
\label{theorem:stateless-upper-bound-multi-leader-scheduler}
In Multi-leader scheduler (Algorithm \ref{alg:stateless-multi-leader-scheduler}), 
 where the transactions and their accessing objects are at most $d$ distance away from each other, Algorithm~\ref{alg:stateless-multi-leader-scheduler} has $O(d\log^2 s  \cdot\min\{k, \sqrt{s}\})$ competitive ratio.
\end{theorem}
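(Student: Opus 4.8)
The plan is to reduce the analysis to the single-leader bound of Theorem~\ref{theorem:stateless-general-graph-competative-ratio}, applied independently inside each cluster, and then to pay separately for (i) the blow-up of intra-cluster distances relative to $d$ and (ii) the overlap of the sparse cover, i.e.\ the fact that a destination shard is shared by many clusters. Fix a time $t$ and let $\T$ be the transactions pending at $t$; as in the single-leader proof, let $l_i$ be the number of transactions of $\T$ accessing shard $S_i$ and $l = \max_i l_i$, so that $l$ lower-bounds the optimal finalization time $\tau^*$. The first thing I would establish is that, because every transaction accesses destination shards within distance $d$ of its home shard, its home cluster sits at a layer $q$ with $2^q = O(d)$: the home cluster is the lowest-layer cluster containing the $z$-neighborhood of the home shard with $z \le d$, and the sparse cover guarantees such a cluster exists once $2^q - 1 \ge z$. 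Consequently only layers $q$ with $2^q = O(d)$ are relevant, and all destination shards of such a transaction lie inside its home cluster, whose (strong) diameter is $O(2^q \log s)$.

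Next I would bound the time charged to a single color round. Inside a home cluster $C$ at layer $q$, the leader and all destination shards of its transactions are within distance $O(2^q \log s)$, so each of the three interaction steps per color costs $O(2^q \log s)$ time; hence one color round in $C$ takes $O(2^q \log s)$ time rather than the $O(d)$ of the single-leader clique case. For the number of colors, I would invoke the coloring argument of Theorem~\ref{theorem:stateless-general-graph-competative-ratio} verbatim inside $C$: restricting to $\T_C \subseteq \T$, the conflict graph is greedily colored with $O(\min\{k,\sqrt s\}\cdot l_C)$ colors, where $l_C \le l$ is the cluster-local maximum load, so the bound $O(\min\{k,\sqrt s\}\cdot l)$ holds uniformly over all clusters (the split into transactions accessing at most $\sqrt s$ versus more than $\sqrt s$ shards is carried out within the cluster exactly as before).

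The heart of the argument is to aggregate these per-cluster costs at the busiest destination shard while accounting for concurrency. I would charge to a shard $S_j$, for each cluster $C \ni S_j$ at a relevant height $(q,r)$, at most $O(\min\{k,\sqrt s\}\cdot l)$ color rounds of duration $O(2^q \log s)$ each. The sparse-cover property bounds the number of clusters at a fixed layer that contain $S_j$ by $O(\log s)$ (one per sublayer, with $H_2 = O(\log s)$ sublayers), giving per-layer cost $O(\log s)\cdot O(\min\{k,\sqrt s\}\cdot l)\cdot O(2^q \log s) = O(\min\{k,\sqrt s\}\cdot l\cdot 2^q \log^2 s)$. Summing over the relevant layers, the geometric series $\sum_q 2^q = O(d)$ collapses the layer count, so the total finalization time for $\T$ is $O(d \log^2 s \cdot \min\{k,\sqrt s\}\cdot l)$; dividing by $\tau^* \ge l$ and taking the supremum over $t$ yields the claimed $O(d \log^2 s \cdot \min\{k,\sqrt s\})$ competitive ratio.

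The step I expect to be the main obstacle is the aggregation in the last paragraph: the destination shards are shared, and subtransactions from distinct clusters are interleaved at $S_j$ by the lexicographic key $(ts,q,r,\mathit{color})$ together with the ignore/ignored re-prioritization. I must argue that this interleaving never makes $S_j$ wait through more than the sum of the per-cluster color rounds charged above --- i.e.\ that $S_j$ is never idle except while waiting on an active color window of some cluster it belongs to --- so that the makespan is genuinely upper-bounded by the sum, rather than by a product, of the per-layer costs. A secondary care point is to keep the two logarithmic factors from distinct sources (the $O(\log s)$ sublayer overlap and the $O(\log s)$ diameter stretch of each cluster), so that they multiply to $\log^2 s$, while the layer index contributes only through the geometric sum $O(d)$ and not an extra $\log d$ factor.
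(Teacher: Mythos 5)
Your proposal follows essentially the same route as the paper's proof: apply the single-leader bound of Theorem~\ref{theorem:stateless-general-graph-competative-ratio} inside each cluster, then sum over the $H_2 = O(\log s)$ sublayers and over the layers, where the geometric series of cluster diameters $\sum_q O(2^q \log s) = O(d\log s)$ together with the sublayer overlap yields the $\log^2 s$ factor. Your write-up is in fact slightly more careful than the paper's, since you sum per-cluster finalization times and divide by the lower bound $l$ once at the end (the paper sums the per-cluster ratios $\tau_q$ directly), and the cross-cluster interleaving issue you flag as the main obstacle is likewise left informal in the paper's own proof.
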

\begin{proof}
In the multi-layer scheduler, we need to consider the transactions from all layers and sublayers of the clusters. Suppose $q'$ is the topmost layer accessed by any transaction where the diameter of the cluster on that layer is at most $d_{q'}$.

Consider the destination shard $S_j$, and we had only subtransactions from one leader shard of cluster layer $q$ where the distance between the transaction and its accessing shard is at most $d_q$, and it has maximum competitive ratio denoted by $\tau_q=O(d_q\cdot \min\{k, \sqrt{s}\})$ (from Theorem~\ref{theorem:stateless-general-graph-competative-ratio}) than any other cluster. 
But now the destination shard $S_j$ needs to process subtransactions from all layers $0,\ldots, q'$ and from sublayers $0, \ldots, H_2-1$, and those transactions are processed according to their assigned order.

As discussed in Section~\ref{sec:shard-clustering}, a cluster at layer $q$ has a diameter at most $O(2^q \log s)$. Thus $d_q = O(2^q\log s) = c2^q\log s$, for some positive constant $c$. This implies $\sum_{q=0}^{q'}d_q \leq 2d_{q'}$.
Thus, the competitive ratio of Algorithm \ref{alg:stateless-multi-leader-scheduler} considering transactions from all layers and sublayers at destination shard $S_j$
is at most:
\begin{equation}
\label{eqn:stateless-combined-layer-sublayer}
 \tau_{total} \leq \sum_{q=0}^{q'} \sum_{r=0}^{H_2-1} \tau_{q} \leq \sum_{q=0}^{q'} \sum_{r=0}^{H_2-1} O(d_q\cdot \min\{k, \sqrt{s}\}) \leq O(d_{q'}H_2\cdot min\{k,\sqrt{s}\}) \ .
\end{equation}

We can replace $H_2 = O(\log s)$ and $d_{q'} = O(d\log s)$ (see Section~\ref{sec:shard-clustering}), then Equation~\ref{eqn:stateless-combined-layer-sublayer} becomes:
$$ O(d\log^2 s \cdot min\{k,\sqrt{s}\}) \ .$$

\end{proof}

\section{Stateful Scheduler}
\label{sec:stateful-scheduler}
In this scheduler model, the leader shard gathers all of the transactions and the current states of the accessing accounts and pre-commits the transactions at the leader. After that, the leader creates the pre-committed subtransactions batch 
and sends that batch to the respective destination shard, where each destination shard reaches a consensus on the received subtransaction order and adds it to their local blockchain. We provide two stateful scheduling algorithms, one with a single leader and the other with multiple leaders.

\subsection{Stateful Single-Leader Scheduler}
\label{sec:single-leader-scheduler}
We present and analyze the {\em stateful single-leader scheduler}, where one of the shards is considered as the leader $S_\ell$, which is responsible for scheduling and processing all the transactions.

When a new transaction $T_i$ is generated at its home shard $S_i$, $S_i$ sends $T_i$ to the leader shard $S_\ell$. Upon receipt, $S_\ell$ appends $T_i$ to its local pending queue $PQ_\ell$. Scheduling event is triggered periodically, either every $4\lambda$ time units or upon processing transactions associated with $\lambda$ distinct colors. Here, $\lambda$ denotes the worst-case communication delay between any two shards, which is at most the diameter of the shard communication graph $G_s$. The $4\lambda$ bound accounts for the communication delays involved in acquiring state information from remote shards and completing the pre-commitment phase and sending the pre-committing batch to the destination shard.

\begin{algorithm*}[t]
\small
\caption{{\sc Steteful Single Leader Scheduler}}
\label{alg:stateful-single-leader-scheduler}

$S_{\ell}$: Leader shard;
$PQ_{\ell}$: Pending txns queue in leader shard\; 
$\T_{\ell}$: Set of scheduled txns maintained by leader; $G_{\T_{\ell}}$: Conflict txn graph on $\T_{\ell}$\;
$\lambda$: worst communication delay between any two shards due to partial-synchrony\;

$PrecommitSubTxnBatch(S_j)$: Precommitted subtransactions batch for shard $S_j$\;

\BlankLine
\SetKwBlock{HomeShardEvent}{\normalfont {\bf Upon generation of a new txn $T_{i}$ at home shard $S_i$}}{}
\HomeShardEvent{
     $S_i$ sends $T_{i}$ to the leader shard $S_\ell$\;\label{alg3:home-shard-send-txn-to-leader}
}

\BlankLine
\SetKwBlock{ReceiveTxnAtLeader}{\normalfont {\bf Upon receiving new txn $T_{i}$ at leader shard $S_\ell$}}{}
\ReceiveTxnAtLeader{
    $S_\ell$ appends $T_{i}$ to $PQ_{\ell}$\;

    \If{$S_\ell$ waits for $4\lambda$ time unit or $S_\ell$ proceed $\lambda$ number of scheduled colors}{\label{alg3:leader-schedule}
        \tcp{Trigger scheduling event}

        Move txns from $PQ_{\ell}$ to $\T_{\ell}$;
        Identify set of accessed accounts $\mathcal{O}_v$ by txns in $\T_{\ell}$\;
        \If{Current state of any account $O_j \in \mathcal{O}_v$ is not locally available at $S_\ell$}{
            For each $O_j$,
                determine the responsible shard $S_j$ and create
                 request batch for each $S_j$\;
            
            Send batched account state request to each destination shard $S_j$\;
        }
        \Else{
          $S_\ell$ has all accounts state, so trigger internal state-ready event (see below)\;
        }
    }
}

\BlankLine
\SetKwBlock{OnStateRequest}{\normalfont {\bf Upon receiving a batched account state request at destination shard $S_j$}}{}
\OnStateRequest{
    Respond to leader shard $S_\ell$ with current states of all requested accounts\;
}

\BlankLine
\SetKwBlock{OnStateReady}{\normalfont {\bf Upon receiving account states from each $S_j$, or already available locally at $S_\ell$}}{}
\OnStateReady{
    $S_\ell$ extend txn conflict graph $G_{\T_{\ell}}$ with new txns in $\T_{\ell}$ and runs incremental greedy vertex coloring algorithm on $G_{\T_{\ell}}$  using $\zeta$ colors without altering already scheduled old txns\; \label{alg3:graph-coloring}
   
    \ForEach{color $\zeta_c\in\zeta$}{\label{alg3:pre-commit-loop-start}
        Pre-commit or abort txns $T_i\in \zeta_c$\ by checking txn condition and account state\;
            If $T_i$ is pre-committed, split $T_i$ into subTxns and create (append) pre-committed subtxns batch order $PrecommitSubTxnBatch(S_j)$ for each destination shard $S_j$\;
            
            Remove $T_i$ from $\T_{\ell}$ and $G_{\T_{\ell}}$. Send the outcome(committed/aborted) to home shard of $T_i$\;
            \tcp{Track processed color}
       
    \If{processed $\lambda$ number of colors}{\textbf{break\;}}
    } \label{alg3:pre-commit-loop-end}
     $S_\ell$ sends $PrecommitSubTxnBatch(S_j)$ to corresponding destination shard $S_j$ parallelly and start to process next batch\;\label{alg3:send-to-destinaiton}

}

\BlankLine

\SetKwBlock{OnPrecommitBatch}{\normalfont {\bf Upon receiving precommitted batch $PrecommitSubTxnBatch(S_j)$ at each $S_j$}}{}
\OnPrecommitBatch{
    Reach consensus on $PrecommitSubTxnBatch(S_j)$ and append batch to the local blockchain\; \label{alg3:reach-consensus-on-destinaiton}
}
\end{algorithm*}

When the scheduling event is triggered, the leader shard moves its pending transactions from $PQ_\ell$ into the scheduling transaction set $\T_\ell$ and identifies the set of accounts $\mathcal{O}_v$ accessed by transactions which are in $\ T_\ell$. If any account state $O_j \in \mathcal{O}_v$ is not locally available at $S_\ell$, it determines the responsible destination shard $S_j$ for each such account, and sends batched account state requests to the corresponding shards. If all required states are already available in $S_\ell$, an internal {\sc State-Ready} event is triggered immediately.

Upon receiving a state request, each destination shard $S_j$ responds with the current state of the requested accounts (e.g., balances). Then, once all necessary account states are collected at $S_\ell$, it extends the conflict graph $G_{\T_\ell}$ by incorporating the new transactions in $\T_\ell$ Then the leader shard $S_\ell$  runs the incremental greedy vertex coloring algorithm~\cite{busch2022dynamic} on $G_{\T_\ell}$ and assigns at most $\zeta$ colors without altering the coloring of previously scheduled old transactions.

The leader then iteratively processes transactions color by color. For each color group $\zeta_c$, $S_\ell$ verifies transaction conditions (e.g., sufficient balance) using the up-to-date account state it gathers. Transactions that are valid and conditions are satisfied are \emph{pre-committed}, while invalid ones are aborted. Then $S_\ell$ splits each pre-committed transaction $T_i$ into subtransactions $T_{i,j}$ based on its accessed shards. These subtransactions are appended to a corresponding pre-commit batch $PrecommitSubTxnBatch(S_j)$ for each destination shard $S_j$. After processing a transaction, it is removed from $\T_\ell$ and $G_{\T_\ell}$, and the outcome (committed or aborted) is reported back to the transaction's home shard $S_i$ to initiate the next transaction.

The pre-commitment phase terminates once $\lambda$ colors are processed, after which $S_\ell$ dispatches all $PrecommitSubTxnBatch(S_j)$ batches to their respective destination shards in parallel. Each destination shard $S_j$ then reaches consensus on the order of subtransactions in the received batch and appends them to its local blockchain. The leader shard $S_\ell$, then waits and proceeds to the next scheduling batch.


\subsubsection{Correctness Analysis of Stateful Single-Leader Scheduler (Algorithm~\ref{alg:stateful-single-leader-scheduler})}
\label{app:safety-and-liveness-analysis-alg:stateful-single-leader-scheduler}



\begin{lemma}[Safety]
\label{lemma:stateful-single-leader-safety}
   If two transactions conflict with each other in Algorithm~\ref{alg:stateful-single-leader-scheduler}, then they commit in different time slots. Furthermore, the local chains produced by Algorithm~\ref{alg:stateful-single-leader-scheduler} ensure global blockchain serialization.
\end{lemma}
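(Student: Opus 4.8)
The plan is to mirror the argument of Lemma~\ref{lemma:stateless-single-leader-safety} but to additionally account for the pre-commitment phase that is unique to the stateful model. First I would recall that when the leader $S_\ell$ extends the conflict graph $G_{\T_\ell}$ and runs the incremental greedy vertex coloring (Line~\ref{alg3:graph-coloring}), any two conflicting transactions $T_i$ and $T_j$ are adjacent in $G_{\T_\ell}$ by Definition~\ref{def:conflict}; since greedy coloring produces a \emph{proper} coloring, adjacent vertices necessarily receive distinct colors. I would then identify each color with a distinct serialization slot: the pre-commitment loop (Lines~\ref{alg3:pre-commit-loop-start}--\ref{alg3:pre-commit-loop-end}) processes one color group $\zeta_c$ at a time, so $T_i$ and $T_j$ are pre-committed (or aborted) in different iterations of the loop, i.e., in different time slots. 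This establishes the first claim.

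Next I would argue that this single, centrally-computed color order is reproduced identically at every destination shard, which yields the global serialization claim. Because the leader assembles each destination batch by iterating over colors in increasing order, the subtransactions of that batch are appended in color order; the destination shard $S_j$ then reaches consensus on the received order and appends it to its local chain (Line~\ref{alg3:reach-consensus-on-destinaiton}) \emph{without reordering}. Hence, for any two conflicting transactions, every shard holding subtransactions of both sees them in the same relative, color-induced order, so combining the local chains produces a single consistent serialization.

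The harder part, and the key difference from the stateless proof, is justifying that the pre-commitment decisions themselves are globally consistent. Here I would use the fact that $S_\ell$ first gathers the current states of all accessed accounts into one local snapshot before coloring and pre-committing, so each commit/abort decision for a color group is evaluated against a single coherent view of account state rather than against states read piecemeal at remote shards. I would also need to show that the relative order is preserved across the periodic scheduling batches (triggered every $4\lambda$ units or every $\lambda$ colors, Line~\ref{alg3:leader-schedule}): since the incremental coloring never alters the colors of already-scheduled old transactions, and new transactions are assigned colors no lower than the pending old ones, the order between any earlier-scheduled transaction and a later one is fixed once it is assigned. I expect this cross-batch monotonicity of color assignment, together with the snapshot-based pre-commitment, to be the main obstacle, since it is what guarantees that the piecewise-generated batches never contradict one another at a shared destination shard. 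Combining the proper-coloring property, the snapshot-consistent pre-commitment, and the monotone cross-batch color assignment, I would conclude that conflicting transactions always commit in distinct slots and that the union of the local chains forms a serializable global blockchain.
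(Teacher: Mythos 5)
Your proposal is correct and follows essentially the same route as the paper's proof: proper greedy coloring of $G_{\T_\ell}$ forces conflicting transactions into distinct colors, the color-by-color pre-commit loop turns colors into distinct time slots, and the leader-determined batch order preserved at every destination shard yields global serialization. The additional points you flag (snapshot-consistent pre-commitment and cross-batch monotonicity of color assignment) are sound elaborations that the paper's own, terser proof leaves implicit rather than a different argument.
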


\begin{proof}
We prove this by induction (analyzing) on the execution of Algorithm~\ref{alg:stateful-single-leader-scheduler}. The leader shard $S_\ell$ constructs the transaction conflict graph and applies the incremental greedy vertex coloring algorithm~\cite{busch2022dynamic} (Line~\ref{alg3:graph-coloring}). This algorithm guarantees that conflicting transactions receive different colors. Then the leader shard $S_\ell$ pre-commits the transactions according to the color they received (Lines~\ref{alg3:pre-commit-loop-start}–\ref{alg3:pre-commit-loop-end}). Thus, the conflicting transactions are committed in different time slots. Additionally, the leader creates batches of pre-committed subtransactions and sends them to the corresponding destination shards (Line~\ref{alg3:send-to-destinaiton}). Each destination shard reaches consensus on the received batch and appends it to its local blockchain (Line~\ref{alg3:reach-consensus-on-destinaiton}). Since the commit order is determined by the leader and this order is preserved across all destination shards, this ensures global serializability.
\end{proof}

\begin{lemma}[Liveness]
\label{lemma:stateful-single-leader-liveness}
    Algorithm~\ref{alg:stateful-single-leader-scheduler} guarantees that every generated transaction will eventually be either committed or aborted.
\end{lemma}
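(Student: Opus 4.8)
The plan is to trace an arbitrary transaction $T_i$ through the pipeline of Algorithm~\ref{alg:stateful-single-leader-scheduler} and bound the time spent in each phase, mirroring the inductive style of the proof of Lemma~\ref{lemma:stateless-single-leader-liveness}. After $T_i$ is generated at its home shard $S_i$ and forwarded to the leader $S_\ell$ (Line~\ref{alg3:home-shard-send-txn-to-leader}), it is appended to $PQ_\ell$. The first step is to observe that a scheduling event is triggered at the latest every $4\lambda$ time units (Line~\ref{alg3:leader-schedule}), independently of how many colors are processed; hence $T_i$ is moved from $PQ_\ell$ into the scheduling set $\T_\ell$ after waiting at most $4\lambda$ time.

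Next I would argue that once $T_i \in \T_\ell$ it receives a finite color from the incremental greedy vertex coloring (Line~\ref{alg3:graph-coloring}), and that this coloring never alters the colors already assigned to older scheduled transactions. As in the proof of Lemma~\ref{lemma:stateless-single-leader-liveness}, I would invoke the fairness rule that each new transaction is assigned a color no smaller than the least color among the pending older transactions, so that the smallest pending color strictly increases over scheduling rounds and no old transaction is starved by a stream of newer arrivals. Because $T_i$'s color is fixed at the moment of assignment and the pre-commit loop (Lines~\ref{alg3:pre-commit-loop-start}--\ref{alg3:pre-commit-loop-end}) processes colors in increasing order, $\lambda$ colors per scheduling round, the color of $T_i$ is reached after only a bounded number of rounds; at that point $T_i$ is either pre-committed or aborted and removed from $\T_\ell$ and $G_{\T_\ell}$.

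Finally I would bound the remaining phases using partial synchrony and Byzantine fault tolerance. The account-state request/response exchange with the destination shards completes within $O(\lambda)$ time because, under the cluster-sending protocol, at least one honest node of each shard delivers the correct message and each shard's internal PBFT instance (with $n_i > 3 f_i$) terminates; likewise, every destination shard reaches consensus on its $PrecommitSubTxnBatch(S_j)$ and appends it to its local blockchain (Line~\ref{alg3:reach-consensus-on-destinaiton}) in finite time. Combining the bounds on the queueing, coloring, state-gathering, and consensus phases shows that $T_i$ is committed or aborted in finite time.

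The main obstacle I anticipate is the color-reaching step: I must show that the color assigned to $T_i$ stays bounded and is not pushed arbitrarily far out by the continuous arrival of new transactions. This is exactly where the incremental (color-preserving) nature of the greedy coloring, together with the monotone advance of the pre-commit pointer across scheduling rounds, is essential; since $T_i$'s color is fixed once assigned and each round consumes $\lambda$ fresh colors, $T_i$'s color is guaranteed to be processed after finitely many rounds, which closes the argument.
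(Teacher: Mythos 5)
Your proposal is correct and follows essentially the same route as the paper's own proof: trace $T_i$ through the pipeline, use the $4\lambda$/\,$\lambda$-color scheduling trigger to bound the queueing delay, and rely on the incremental coloring (which never alters already-assigned colors) together with color-by-color pre-commitment to conclude that $T_i$'s color is eventually reached and a commit/abort decision is sent to the home shard. If anything, you are slightly more thorough than the paper's proof of this lemma, which does not explicitly restate the fairness rule (new transactions receive colors no smaller than the least pending color) that you correctly import from the stateless case to rule out starvation, nor does it spell out the $O(\lambda)$ bounds on the state-gathering and consensus phases.
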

\begin{proof}
We prove by induction that every transaction progresses through the system without indefinite delay. When a transaction $T_i$ is generated at its home shard, it is forwarded to the leader shard $S_\ell$ (Line~\ref{alg3:home-shard-send-txn-to-leader}). The leader maintains a queue $PQ_\ell$ for pending transactions and periodically moves transactions from $PQ_\ell$ to the scheduling set $\T_\ell$ either after waiting for $4\lambda$ time units or after processing $\lambda$ colors (Line~\ref{alg3:leader-schedule}). Due to this bounded waiting and the assumption of partial synchrony, each transaction will eventually be scheduled.

Once scheduled, $T_i$ is added to the transaction conflict graph $G_{\T_\ell}$, and the leader runs an incremental greedy coloring algorithm. The algorithm ensures that new transactions are assigned colors without modifying previously scheduled old transactions (Line~\ref{alg3:graph-coloring}). The leader then pre-commits transactions color-by-color, and after processing $\lambda$ colors, it starts the next scheduling batch (Lines~\ref{alg3:pre-commit-loop-start}–\ref{alg3:pre-commit-loop-end}).
Each pre-committed transaction is either committed (if conditions are met) or aborted (if conditions fail), and this decision is sent to the home shard. Thus, the algorithm guarantees that every transaction will eventually reach a decision (commit or abort), ensuring liveness.
\end{proof}
 

\begin{corollary}
    From Lemma~\ref{lemma:stateful-single-leader-safety} and Lemma~\ref{lemma:stateful-single-leader-liveness}, Algorithm~\ref{alg:stateful-single-leader-scheduler} ensures the safety and liveness of the transactions.
\end{corollary}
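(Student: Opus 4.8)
The plan is to obtain this corollary directly as the conjunction of the two preceding results, since the statement asserts exactly the two properties that Lemma~\ref{lemma:stateful-single-leader-safety} and Lemma~\ref{lemma:stateful-single-leader-liveness} individually establish; no fresh argument about the algorithm's execution is required. First I would invoke Lemma~\ref{lemma:stateful-single-leader-safety}, which already proves the safety half: any two conflicting transactions receive distinct colors under the incremental greedy vertex coloring on $G_{\T_\ell}$ (Line~\ref{alg3:graph-coloring}), hence are pre-committed in different color rounds (Lines~\ref{alg3:pre-commit-loop-start}--\ref{alg3:pre-commit-loop-end}) and therefore commit in different time slots, while the leader-determined commit order is preserved across every destination shard (Lines~\ref{alg3:send-to-destinaiton} and~\ref{alg3:reach-consensus-on-destinaiton}), yielding global blockchain serialization. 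This discharges the ``safety'' clause of the corollary verbatim.

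Next I would invoke Lemma~\ref{lemma:stateful-single-leader-liveness} for the liveness half: every generated transaction $T_i$ is forwarded to the leader (Line~\ref{alg3:home-shard-send-txn-to-leader}), moved from $PQ_\ell$ into $\T_\ell$ within the bounded $4\lambda$ scheduling window, colored incrementally without disturbing already-scheduled transactions, and then pre-committed or aborted after finitely many color rounds, so that a decision reaches each home shard in finite time. Combining the two lemmas, the algorithm simultaneously satisfies both the safety property (conflicting transactions serialize consistently and commit in distinct slots) and the liveness property (every transaction eventually commits or aborts), which is precisely the claim of the corollary.

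The only point that warrants a moment's care, and the nearest thing to an obstacle, is confirming that ``safety and liveness'' in the corollary's phrasing is fully captured by the union of the two lemma statements with no residual gap; since the definitions of safety (serialization and distinct commit slots for conflicting transactions) and liveness (eventual commit-or-abort of every transaction) coincide exactly with the conclusions of Lemma~\ref{lemma:stateful-single-leader-safety} and Lemma~\ref{lemma:stateful-single-leader-liveness} respectively, this check is immediate and the corollary follows without additional work.
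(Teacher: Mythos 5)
Your proposal is correct and matches the paper exactly: the corollary is stated there without any proof precisely because it is the immediate conjunction of Lemma~\ref{lemma:stateful-single-leader-safety} (safety) and Lemma~\ref{lemma:stateful-single-leader-liveness} (liveness), which is exactly how you argue it. Your added sanity check that the corollary's phrasing introduces no property beyond the two lemmas' conclusions is sound and requires nothing further.
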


\subsubsection{Performance Analysis of Single-Leader Scheduler (Algorithm~\ref{alg:stateful-single-leader-scheduler})} 
In the following, we analyze the time unit required to process transactions by Algorithm \ref{alg:stateful-single-leader-scheduler}. 
We focus on the special case where the maximum distance between
the transactions, their accessing objects, and the leader is at most $d$, and at least one transaction accesses objects at a distance $\Omega(d)$.
This special case is useful for the analysis of the multi-leader case.
We are focusing on the time period after the leader shard has determined the schedule for the transactions. This is because the scheduling and committing steps are executed in parallel.

\begin{theorem}
\label{theorem:stateful-upper-bound-general-graph-competative-ratio}
    [General Graph] In the General graph, where the transactions, their accessing objects, and the leader are at most $d$ distance away from each other, and at least one transaction is
    $\Omega(d)$ distance from the accessing shards, Algorithm~\ref{alg:stateful-single-leader-scheduler} has $O(\min\{k, \sqrt{s}\})$ competitive ratio.
\end{theorem}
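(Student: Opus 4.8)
The plan is to decouple the argument into a combinatorial bound on the number of colors (serialization slots) and a scheduling-time bound showing that, unlike the stateless case, each color costs only amortized $O(1)$ time. The combinatorial part is identical to Theorem~\ref{theorem:stateless-general-graph-competative-ratio}, so I would reuse it verbatim: for the set $\T$ of transactions pending at an arbitrary time $t$, let $l=\max_i l_i$ be the maximum number of transactions in $\T$ accessing any single destination shard, so that $l$ lower-bounds the finalization time, and the greedy coloring of the conflict graph $G_\T$ uses $C=O(\min\{k,\sqrt{s}\}\cdot l)$ colors. This follows by the same case split — at most $kl+1$ colors when $k\le\sqrt{s}$, and at most $l\sqrt{s}+1+|B|$ colors with $|B|/l<\sqrt{s}$ when $k>\sqrt{s}$. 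The bound depends only on $G_\T$ and the greedy rule, not on the model, so it carries over unchanged.

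The new ingredient, and the source of the $d$-independence, is the time analysis. First I would observe that in the stateful model the leader pre-commits each color \emph{locally}, after a single up-front round of gathering account states, rather than performing a per-color round trip to the destination shards as the stateless scheduler does. Concretely, one scheduling round gathers states (a round trip of cost $O(d)$, since all accessed shards lie within distance $d$), pre-commits up to $\lambda=\Theta(d)$ colors by local computation, and dispatches the pre-committed batches once (cost $O(d)$); this is exactly the $O(\lambda)=O(d)$ budget charged by the $4\lambda$ triggering rule. Because the leader pipelines, beginning the next batch immediately after dispatching the current one (Line~\ref{alg3:send-to-destinaiton}), $\Theta(d)$ colors are retired per $O(d)$ time, i.e. amortized $O(1)$ per color. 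Hence the total time to finalize $\T$ is $O(C)+O(d)=O(\min\{k,\sqrt{s}\}\cdot l + d)$, where the additive $O(d)$ accounts for filling the pipeline and for the latency of the final batch. This is to be contrasted with the stateless bound, where each color costs $\Theta(d)$ and the total is $O(d\,C)$.

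Finally I would combine this with two lower bounds on the optimal batch time $\tau^*$: first $\tau^*\ge l$ (serialization at the busiest shard), and second $\tau^*=\Omega(d)$, which is precisely why the hypothesis that some transaction lies $\Omega(d)$ from its accessed shards is needed — even the optimal schedule must pay $\Omega(d)$ to commit that transaction. Thus $\tau^*=\Omega(\max\{l,d\})$, and
\[
r = \frac{O(\min\{k,\sqrt{s}\}\cdot l + d)}{\Omega(\max\{l,d\})} = O(\min\{k,\sqrt{s}\}),
\]
by a short case split: when $l\ge d$ the additive $d$ term is dominated by $l$; when $l<d$ the term $\min\{k,\sqrt{s}\}\cdot l$ is dominated by $\min\{k,\sqrt{s}\}\cdot d$, and dividing by $d$ again leaves $O(\min\{k,\sqrt{s}\})$. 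Since $t$ is arbitrary, this bounds the competitive ratio.

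The hard part will be justifying the amortized $O(1)$-per-color claim rigorously, namely that pipelining really lets the $\Omega(d)$ lower bound on $\tau^*$ absorb the additive $O(d)$ pipeline-fill term, and that pipelining does not break safety — conflicting transactions placed in different batches must still commit in the leader-imposed color order, which Lemma~\ref{lemma:stateful-single-leader-safety} guarantees. The conceptual subtlety is that the $d$ factor of Theorem~\ref{theorem:stateless-general-graph-competative-ratio} vanishes exactly because local pre-commitment replaces per-color communication; making the identification $\lambda=\Theta(d)$ legitimate for this special case, so that a round's color-throughput matches its $O(d)$ time cost, is where the $\Omega(d)$ hypothesis does its work.
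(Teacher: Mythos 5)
Your proposal is correct and takes essentially the same route as the paper's own proof: the identical greedy-coloring bound of $O(l\cdot\min\{k,\sqrt{s}\})$ colors carried over from the stateless theorem, the same upper bound of $O(l\cdot\min\{k,\sqrt{s}\}+d)$ on the finalization time, and the same lower bound $\Omega(l+d)$ obtained from serialization at the busiest shard together with the hypothesized $\Omega(d)$-distant transaction. The only difference is expository: you make explicit the pipelining/amortization argument (retiring $\Theta(d)$ colors per $O(d)$-time round, so the $O(d)$ communication cost enters additively rather than multiplicatively) that the paper compresses into the terse per-batch accounting $kl+1+3d+2=O(kl+d)$.
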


\begin{proof} This proof follows the same arguments discussed in the proof of Theorem~\ref{theorem:stateless-general-graph-competative-ratio}.  Consider a set of transactions $\T$ generated at or before time $t$ that are still pending (neither committed nor aborted) at time $t$.
Let $G_{\T}$ denote the conflict graph for $\T$,
where two transactions conflict if they have a common destination shard.
Let $l_i$ denote the number of transactions in $\T$ that use objects in shard $S_i$.
Let $l = \max l_i$. Moreover, from the definition of $d$, at least one transaction is $d$ distance away from the destination shard or leader. So we have that $\Omega(l+d)$ is a lower bound on the time 
that it takes to finalize (commit or abort) the transactions in $\T$, since at least $l$ subtransactions need to serialize in a destination shard, and at least one transaction is $d$ distance away. 

First, consider the case where $k \leq \sqrt s$.
We have that each transaction conflicts with at most $k l$
other transactions. Hence $G_{\T}$ can be colored with
at most $k l + 1$ colors.

Algorithm~\ref{alg:stateful-single-leader-scheduler} schedules and commits transactions in batches. For each batch, the leader shard performs the following steps: first, it gathers the state of accessed accounts, takes at most $2d$ time units (request and receive each takes at most $d$ time units). After pre-committing, the leader sends the pre-commit batch to destination shards, which takes $d$ time units.
Additionally, destination shards reach consensus on the received batch within $1$ time unit. Hence, the total delay per batch is at most $3d + 1$.

Since the algorithm uses at most $kl + 1$ colors (batches), the total finalization time is at most:
$ kl + 1 + 3d + 2 = O(kl + d)$. 


Next, consider the case $k > \sqrt s$. Following the same reasoning above and from Theorem~\ref{theorem:stateless-general-graph-competative-ratio}, we get $O(l \sqrt{s} + d)$
time to finalize the transactions $\T$.


Overall, Algorithm~\ref{alg:stateful-single-leader-scheduler} requires
$O(l\cdot \min\{k,\sqrt{s}\}+d)$ time units to finalize the transactions.
Since $\Omega(l + d)$ is a lower bound,
we have that the approximation factor of the schedule for $\T$ is $O(\min\{k,\sqrt{s}\})$.


Since $t$ is chosen arbitrarily,
we have that the competitive ratio of Algorithm~\ref{alg:stateful-single-leader-scheduler} is $O(\min\{k,\sqrt s\})$.
\end{proof}

\subsection{Stateful Multi-Leader Scheduler}
\label{sec:stateful-multi-leader-scheduler}
We present a {\em stateful multi-leader scheduler} in which multiple leader shards are responsible for scheduling and processing transactions. 
In the single-leader algorithm, the value $d$ includes the distance to the leader, but in the multi-leader, $d$ does not include the relative distance to the leader.
This allows the multi-leader algorithm to capture better the locality of transactions, allowing for shorter distance messages between the involved home and destination shards.

\begin{algorithm*}[t]
\small
\caption{\sc Stateful Multi-Leader Scheduler}
\label{alg:stateful-multi-leader-scheduler}

Each shard knows the hierarchical cluster decomposition of $G_s$\;
Each cluster $C(q,r)$ has: leader shard $S_{\ell}$, txn queue $PQ_\ell$, scheduled txns $\T_\ell$, conflict graph $G_{\T_\ell}$\;
$\lambda_C$: worst communication delay between any two shards in cluster $C$ due to partial-synchrony\;

{\tt scheduleControl}: Boolean flag indicating whether the cluster currently holds scheduling control\;

\BlankLine

\SetKwBlock{OnTxnGenerated}{\bf Upon generation of new txn $T_i$ at home shard $S_i$}{}
\OnTxnGenerated{
    $S_i$ determines the lowest cluster $C(q, r)$ which includes $T_i$ and its accessing shards. Then  $S_i$ sends $T_i$ to leader shard $S_\ell$ of $C(q, r)$\;
}

\BlankLine

\SetKwBlock{OnTxnReceivedAtLeader}{\bf Upon receiving txn(s) $T_i$ at leader shard $S_\ell$ of $C(q, r)$}{}
\OnTxnReceivedAtLeader{
    $S_\ell$ appends $T_i$ to its pending transactions queue $PQ_\ell$\;


    \If{$S_\ell$ waits for $4\lambda_C$ time unit or $S_\ell$ proceed $\lambda_C$ number of previous scheduled colors}{
        \If{${\tt scheduleControl} == \text{True}$}{

            \tcp{Invoke single-leader scheduling logic}
            Run Single-Leader Scheduler (Algorithm~\ref{alg:stateful-single-leader-scheduler}) with $(PQ_\ell, \T_\ell, G_{\T_\ell}, \lambda_C)$\;
            \tcp{If Algorithm~\ref{alg:stateful-single-leader-scheduler} break after process $\lambda_C$ number of scheduled colors then check and do following:}
            \If{parent cluster $C'$ requests control}{
                Send {\tt scheduleControl} to the parent and set
                ${\tt scheduleControl} \gets \text{False}$\;
            }

            \ElseIf{children clusters $C''$ request control}{
                Send {\tt scheduleControl} to children and set
                ${\tt scheduleControl} \gets \text{False}$\;
            }
            \ElseIf{$C(q,r)$ doesn't have remaining transactions to schedule}{
                Send {\tt scheduleControl} down to children and set
                ${\tt scheduleControl} \gets \text{False}$\;
            }

        } \Else {
            Send request to current {\tt scheduleControl} holder  (e.g., child or parent cluster);
        }
    }
}

\BlankLine

\SetKwBlock{OnControlReceived}{\bf Upon receiving {\tt scheduleControl} at leader $S_\ell$ of $C(q, r)$}{}
\OnControlReceived{
\If{$S_\ell$ previously requested $scheduledControl$ to process its txns}{
 Set ${\tt scheduleControl} \gets \text{True}$ and trigger internal event (see above on line 9-12)\;
}
\Else{
Send {\tt scheduleControl} to parent or child clusters according to the request it gets\;
}
}
\end{algorithm*}

The system assumes a hierarchical cluster decomposition~\cite{gupta2006oblivious} of the shard graph $G_s$, which is globally known to all shards. Each cluster $C(q, r)$ in the hierarchy is associated with a leader shard $S_\ell$, a pending transaction queue $PQ_\ell$, a scheduled transaction set $\T_\ell$, and a transaction conflict graph $G_{\T_\ell}$. The parameter $\lambda_C$ denotes the worst-case communication delay between any two shards within the cluster $C$, which arises from the assumption of a partially synchronous communication model.


In multi-leader scheduling Algorithm~\ref{alg:stateful-multi-leader-scheduler}, when a transaction $T_i$ is generated at its home shard $S_i$, the shard identifies the lowest cluster $C(q, r)$ that contains all the shards accessed by $T_i$, and then forwards $T_i$ to the leader shard $S_\ell$ of of cluster $C$. The leader shard $S_\ell$ appends the received transaction to its pending queue $PQ_\ell$. Periodically, the leader checks if either $4\lambda_C$ time units have elapsed since the last scheduling event or if $\lambda_C$ colors of scheduled transactions have been processed by $S_\ell$. If either condition is met and the leader holds the {\tt scheduleControl}, it invokes the single-leader scheduler (Algorithm~\ref{alg:stateful-single-leader-scheduler}) on its local structures $(PQ_\ell, \T_\ell, G_{\T_\ell}, \lambda_C)$ to process transactions.

The scheduling control, denoted by the boolean flag \texttt{scheduleControl}, determines which cluster can perform scheduling operations at a given time unit. The control flows hierarchically between parent and child clusters. A parent cluster of $C$ is any cluster at a higher level in the hierarchy (with height $(q', r') > (q, r)$) that shares at least one shard with $C$. Similarly, a child cluster of $C$ is a lower-level cluster (with height $(q'', r'') < (q, r)$) that overlaps with $C$. Clusters may have multiple parents and children. If $C$ is at the bottom-most level (height $(0,0)$), initially it has {\tt scheduleControl}. Otherwise, it must request control from all its children. Once all children respond the {\tt scheduleControl}, the leader $S_\ell$ sets \texttt{scheduleControl} to true and proceeds with the scheduling.

After executing the single-leader scheduler, if the parent cluster $C'$ requests control, the leader transfers \texttt{scheduleControl} to the parent and sets it to false locally. If instead a child cluster $C''$ has made a request, the control is passed down to the child. If there are no remaining transactions to process, the control is also passed downward to allow lower-level clusters to schedule pending transactions. If the leader does not have {\tt scheduleControl} when scheduling should occur, it sends a control request to the current holder (parent or child). Additionally, if $C$ receives a control request from a parent $C'$ while not holding control, it forwards the request to its children. Once all children respond positively, it passes control up to $C'$. This hierarchical and event-driven mechanism ensures coordinated and conflict-free scheduling across multiple levels of the cluster hierarchy.

\subsubsection{Correctness Analysis of Stateful Multi-Leader Scheduler (Algorithm~\ref{alg:stateful-multi-leader-scheduler})}
\label{app:safety-and-liveness-analysis-stateful-multi-leader-scheduler}
\begin{lemma}[Safety]
\label{lemma:multi-leader-safety}
   If two transactions conflict with each other in Algorithm~\ref{alg:stateful-multi-leader-scheduler}, then they will commit in different time slots, and the local chain produced by Algorithm~\ref{alg:stateful-multi-leader-scheduler} ensures blockchain serialization.
\end{lemma}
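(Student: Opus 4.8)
The plan is to split the argument into an intra-cluster case and an inter-cluster case, and then stitch the two together at each shared destination shard. For the intra-cluster case, whenever two conflicting transactions $T_i$ and $T_j$ share the same home cluster, they are scheduled by the same leader $S_\ell$, which runs Algorithm~\ref{alg:stateful-single-leader-scheduler} on the local conflict graph $G_{\T_\ell}$. By Lemma~\ref{lemma:stateful-single-leader-safety}, the incremental greedy coloring assigns them distinct colors, so they are pre-committed in distinct time slots and their subtransactions are appended to each shared destination shard in the color-induced order. Thus this case reduces directly to the already-established single-leader safety, and no new work is needed.

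The substantive part is the inter-cluster case, where $T_i$ and $T_j$ are scheduled by two different leaders. By Definition~\ref{def:conflict}, conflicting transactions access a common destination shard $S_k$. By the home-cluster rule (Section~\ref{sec:shard-clustering}), the home cluster of a transaction must contain all of its destination shards, hence in particular $S_k$. Therefore both scheduling clusters $C_i$ and $C_j$ contain $S_k$; they overlap, and are comparable along a chain of clusters all containing $S_k$ that is linked by the parent/child relation (which is itself defined through shared shards). The key claim I would prove is that the \texttt{scheduleControl} token enforces mutual exclusion \emph{on $S_k$}: at no time do two clusters that both contain $S_k$ simultaneously hold control and emit pre-commit batches destined for $S_k$.

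To establish this invariant, I would argue by induction on the token-passing steps that, for a fixed shard $S_k$, the set of clusters containing $S_k$ that currently hold \texttt{scheduleControl} has size at most one. The base case is the initialization at the bottom-most clusters (height $(0,0)$), and the inductive step uses the request/grant handshakes in Algorithm~\ref{alg:stateful-multi-leader-scheduler}: control is transferred to a parent only after \emph{all} children release it, and passed down to children only after the parent relinquishes it. Given this invariant, the batches that $C_i$ and $C_j$ send to $S_k$ are produced while their senders hold control one at a time, so $S_k$ receives them in a well-defined sequential order, reaches consensus on each, and appends them to its local blockchain in arrival order; consequently $T_i$ and $T_j$ obtain distinct time slots there.

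Finally I would assemble the global serialization. Every pair of conflicting transactions shares at least one destination shard, and at that shard the commit order is total and agreed by consensus; non-conflicting transactions impose no ordering constraint. Hence the union of the per-shard local chains induces an acyclic precedence relation and yields a single serialization order consistent with all shards. The hardest step will be the inter-cluster mutual-exclusion invariant: since a cluster may have several parents and several children, I must verify carefully that the hierarchical token protocol never lets two overlapping clusters hold control at once, as this is precisely what guarantees that conflicting transactions scheduled by different leaders never collide in the same time slot at their shared shard.
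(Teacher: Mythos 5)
Your proposal is correct in outline, but it takes a genuinely more complete route than the paper, whose entire proof is a one-line deferral: since each cluster of Algorithm~\ref{alg:stateful-multi-leader-scheduler} invokes Algorithm~\ref{alg:stateful-single-leader-scheduler}, the paper simply states that the claim ``follows the same reasoning as Lemma~\ref{lemma:stateful-single-leader-safety}.'' Your intra-cluster case is exactly that reduction, so there you coincide with the paper. Where you diverge is the inter-cluster case, which the paper does not address at all: you observe that two conflicting transactions scheduled by different leaders both have home clusters containing the shared destination shard $S_k$ (by the home-cluster rule), that any two clusters containing $S_k$ are comparable under the parent/child relation (same-height clusters lie in a common partition sublayer and hence cannot both contain $S_k$), and you reduce cross-cluster serialization to a mutual-exclusion invariant for the \texttt{scheduleControl} token: at most one cluster containing $S_k$ holds control at any time, proved by induction on the token handshakes with the base case given by the bottom sublayer being a partition of $G_s$. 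This is precisely the property on which the paper's deferral silently relies --- without it, nothing prevents two overlapping clusters from concurrently emitting pre-commit batches to $S_k$ --- yet the paper never states or proves it. Your approach buys a sound and complete argument and correctly pinpoints where the real difficulty sits; the paper's approach buys brevity at the cost of leaving the cross-cluster case unargued. One caveat: your text is still a plan, since the mutual-exclusion invariant is asserted and the induction sketched but not carried out; the concurrent-request scenarios (several parents and several children of overlapping clusters issuing requests simultaneously, and the interim states in which control is in flight) must be checked case by case against the informally specified handshake in Algorithm~\ref{alg:stateful-multi-leader-scheduler} before the invariant, and hence the lemma, is fully established.
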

As each cluster $C$ of Algorithm~\ref{alg:stateful-multi-leader-scheduler} involves the Algorithm~\ref{alg:stateful-single-leader-scheduler}, the proof follows the same reasoning as Lemma~\ref{lemma:stateful-single-leader-safety}.

\begin{lemma}[Liveness]
\label{lemma:multi-leader-liveness}
    Algorithm~\ref{alg:stateful-multi-leader-scheduler} guarantees that every generated transaction will eventually be committed or aborted.
\end{lemma}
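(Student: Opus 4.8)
The plan is to decompose liveness into two independent guarantees: (i) \emph{local progress}, that once a cluster $C$ acquires \texttt{scheduleControl} and invokes Algorithm~\ref{alg:stateful-single-leader-scheduler} on its local structures $(PQ_\ell,\T_\ell,G_{\T_\ell},\lambda_C)$, every transaction sitting in $PQ_\ell$ is committed or aborted within a bounded number of $C$'s scheduling rounds; and (ii) \emph{control fairness}, that every cluster holding pending transactions eventually acquires \texttt{scheduleControl}. For (i) I would invoke Lemma~\ref{lemma:stateful-single-leader-liveness} essentially verbatim: the periodic triggering every $4\lambda_C$ time units (or after $\lambda_C$ processed colors) together with the incremental greedy coloring that never re-colors already scheduled old transactions guarantees that each transaction routed to $C$ reaches a decision in finite time, \emph{conditioned} on $C$ being able to run. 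The entire difficulty is therefore concentrated in (ii).

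The hard part is (ii), proving that the token-passing discipline is both deadlock-free and starvation-free across the hierarchy. First I would observe that the parent/child relation along which \texttt{scheduleControl} travels respects the lexicographic order on cluster heights $(q,r)$: by the definitions in Section~\ref{sec:shard-clustering}, a parent of $C$ always has strictly greater height and a child strictly smaller. Hence the overlap relation carrying control is acyclic, which rules out circular waiting and therefore deadlock. Next I would use the fact that the total number of clusters is finite (at most $O(\log D)$ layers times $O(\log s)$ sublayers, with each shard in $O(\log s)$ clusters per layer), so a control request issued by a cluster is forwarded monotonically toward the current holder along this acyclic relation and is thus served after finitely many forwarding steps.

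To rule out starvation (livelock), I would show that control cannot bounce among clusters indefinitely while a waiting cluster makes no progress. The key invariant is that whenever $C$ relinquishes control it does so only \emph{after} having processed $\lambda_C$ colors (genuine progress) or when it has no remaining transactions, and that control is always handed to an explicit requester rather than speculatively re-acquired. Combining this with the acyclicity of the parent/child relation, I would argue by induction on cluster height that any requesting cluster is overtaken only finitely often before being granted control, so every cluster with pending work is served within finitely many control transfers. The crux of the argument is exactly this acyclicity-plus-finiteness bound, since a cluster may have several parents and children and may pass control in both directions, so the fairness claim is not immediate from the protocol text alone.

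Finally I would combine the two parts. An arbitrary generated transaction $T_i$ is routed to the unique lowest cluster $C$ that contains its home shard and all its accessed shards; since $C$ then holds a pending transaction, it eventually requests and, by control fairness, obtains \texttt{scheduleControl}, whereupon local progress (Lemma~\ref{lemma:stateful-single-leader-liveness} applied inside $C$) commits or aborts $T_i$ within finitely many of $C$'s scheduling rounds. As $T_i$ was arbitrary, every generated transaction eventually terminates, establishing the liveness of Algorithm~\ref{alg:stateful-multi-leader-scheduler}.
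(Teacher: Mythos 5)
Your proposal is considerably more ambitious than what the paper actually does: the paper's entire proof of this lemma is a one-sentence deferral, observing that each cluster runs Algorithm~\ref{alg:stateful-single-leader-scheduler} and that the argument of Lemma~\ref{lemma:stateful-single-leader-liveness} therefore applies inside each cluster. That is exactly your part (i), and nothing more. Your part (ii) --- control fairness, i.e.\ that every cluster with pending work eventually acquires \texttt{scheduleControl} --- is a genuine obligation that the paper leaves implicit: the single-leader liveness argument is conditioned on the leader being able to run its scheduling event, and in the multi-leader setting that conditioning is exactly the token-passing question you isolate. So you correctly identified where the real difficulty lies, and your decomposition is the more honest route; the paper's brevity buys nothing except the assumption that control transfer ``just works.''

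That said, the step in your part (ii) where acyclicity of the parent/child height order ``rules out circular waiting and therefore deadlock'' is not sound as stated. Acyclicity of a resource ordering prevents deadlock only when holders acquire resources consistently with that order and do not hold-and-wait; here, a cluster must gather \texttt{scheduleControl} from \emph{all} of its children before setting its flag to true, and clusters in different sublayers of the same layer can overlap, sharing children. Two such sibling parents $A$ and $B$ with common children $c_1, c_2$ can each obtain control from one child and wait on the other, producing a cycle in the \emph{wait-for} graph even though the parent/child relation itself is acyclic. Closing this gap requires either an additional protocol property (e.g., a cluster releases partially gathered control after a timeout, or children are requested in a globally fixed order) or an argument that the hierarchy's structure precludes such sibling contention; neither is supplied in your sketch, and --- to be fair --- the paper does not address it either. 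Your final paragraph combining (i) and (ii) is fine once (ii) is repaired.
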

As each cluster $C$ of Algorithm~\ref{alg:stateful-multi-leader-scheduler} involves the Algorithm~\ref{alg:stateful-single-leader-scheduler}, the proof follows the same reasoning as Lemma~\ref{lemma:stateful-single-leader-liveness}.

\begin{corollary}
    From Lemma~\ref{lemma:multi-leader-safety} and Lemma~\ref{lemma:multi-leader-liveness}, Algorithm~\ref{alg:stateful-multi-leader-scheduler} ensures the safety and liveness of the transactions.
\end{corollary}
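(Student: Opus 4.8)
The plan is to obtain the statement directly as the logical conjunction of the two immediately preceding results, Lemma~\ref{lemma:multi-leader-safety} and Lemma~\ref{lemma:multi-leader-liveness}, both of which I may assume. The corollary claims that Algorithm~\ref{alg:stateful-multi-leader-scheduler} guarantees \emph{safety and liveness}, so the first step is simply to fix what these two words mean for this system. Safety is the property proved in Lemma~\ref{lemma:multi-leader-safety}: conflicting transactions never commit in the same time slot, and the per-shard local chains compose into a globally serializable blockchain. Liveness is the property proved in Lemma~\ref{lemma:multi-leader-liveness}: every generated transaction reaches a terminal state, commit or abort, in finite time. Once these are identified, the corollary is exactly the assertion that both hold for every execution, so it follows by invoking the two lemmas together.

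Concretely, I would argue that for an arbitrary execution of Algorithm~\ref{alg:stateful-multi-leader-scheduler}, Lemma~\ref{lemma:multi-leader-safety} discharges the safety clause and Lemma~\ref{lemma:multi-leader-liveness} discharges the liveness clause, and since ``ensures the safety and liveness of the transactions'' is precisely the conjunction of these clauses, no further argument is needed. Because both lemmas are themselves established by reducing each cluster's behavior to the single-leader Algorithm~\ref{alg:stateful-single-leader-scheduler} (and hence to Lemmas~\ref{lemma:stateful-single-leader-safety} and~\ref{lemma:stateful-single-leader-liveness}), the corollary inherits those guarantees without any new case analysis.

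Since I am permitted to assume the two lemmas, the corollary carries no real obstacle of its own; the genuine difficulty was already absorbed into the liveness lemma. If I had to re-establish that difficulty, it would be showing that the hierarchical {\tt scheduleControl} hand-off --- the one mechanism of Algorithm~\ref{alg:stateful-multi-leader-scheduler} with no counterpart in the single-leader scheduler --- neither deadlocks nor lets two overlapping clusters concurrently order conflicting transactions on a shared shard. Verifying that the control token is always eventually released (downward when a cluster is idle, and up or down on request) is what guarantees no cluster is starved, while mutual exclusion of the token along any chain of overlapping clusters is what preserves the cross-cluster serialization. Granting the two lemmas, however, these points are already settled, and the conjunction delivers the corollary immediately.
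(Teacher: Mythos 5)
Your proposal is correct and matches the paper exactly: the corollary is stated there with no additional proof, as the immediate conjunction of Lemma~\ref{lemma:multi-leader-safety} (safety) and Lemma~\ref{lemma:multi-leader-liveness} (liveness), which is precisely your argument. Your added remark that the real work (the {\tt scheduleControl} hand-off) is absorbed into the lemmas is a fair observation but not needed once the lemmas are assumed.
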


\subsubsection{Performance Analysis of Multi-Leader Scheduler (Algorithm~\ref{alg:stateful-multi-leader-scheduler})}
The multi-leader scheduler is the extended version of the single-leader scheduler (Algorithm~\ref{alg:stateful-single-leader-scheduler}) while introducing an additional overhead cost due to its shard (hierarchical) clustering structure and comes from the layers and sublayers of the clusters.

\begin{theorem}
\label{lemma:stateful-upper-bound-multi-leader-scheduler}
In Multi-leader scheduler (Algorithm \ref{alg:stateful-multi-leader-scheduler}), 
 where the transactions and their accessing objects are at most $d$ distance away from each other, Algorithm~\ref{alg:stateful-multi-leader-scheduler} has $O(\log s\cdot \min\{k, \sqrt{s}\}+\log^2 s)$ competitive ratio.
\end{theorem}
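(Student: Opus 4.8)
The plan is to reduce the multi-leader analysis to a repeated application of the stateful single-leader bound (Theorem~\ref{theorem:stateful-upper-bound-general-graph-competative-ratio}) inside each cluster, and then pay separately for the hierarchical overhead contributed by the layers and sublayers. Fix an arbitrary time $t$ and let $\T$ be the set of pending transactions, with maximum per-shard load $l=\max_i l_i$ and maximum home-to-destination distance $d$; exactly as in the single-leader case, the optimal finalization time obeys $\tau^*=\Omega(l+d)$. Because every transaction is handled by the leader of its home cluster, $\T$ is partitioned across clusters indexed by layer $q\in\{0,\dots,q'\}$ and sublayer $r\in\{0,\dots,H_2-1\}$, where $q'=O(\log d)$ is the topmost layer reached by any transaction and $H_2=O(\log s)$.

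First I would bound the work done inside a single layer-$q$ cluster. Since each such cluster runs Algorithm~\ref{alg:stateful-single-leader-scheduler} internally, Theorem~\ref{theorem:stateful-upper-bound-general-graph-competative-ratio} gives a finalization time of $O(l_q\cdot\min\{k,\sqrt{s}\}+d_q)$, where $l_q\le l$ is the maximum load among layer-$q$ transactions and $d_q$ plays the role of the intra-cluster delay $\lambda_C$. Using the sparse-cover guarantee that a layer-$q$ cluster has diameter $d_q=O(2^q\log s)$ while its transactions actually travel distance $\Theta(2^q)$, the load term contributes the factor $\min\{k,\sqrt{s}\}$ and the diameter term contributes an extra $\log s$ relative to the true transaction distance.

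Next I would combine the clusters, mirroring Equation~\ref{eqn:stateless-combined-layer-sublayer} but exploiting that the stateful per-cluster bound is independent of $d_q$ in its load term. The intended accounting is that the \texttt{scheduleControl} protocol serializes only clusters that share shards (so a bottleneck shard sees at most the $O(\log s)$ clusters it belongs to per layer, costing a factor $H_2=O(\log s)$) while allowing disjoint clusters to run concurrently, so the load contributions across layers combine as a maximum (bounded by $l$) rather than a sum. By contrast the additive diameter overheads telescope geometrically, $\sum_{q=0}^{q'}d_q\le 2d_{q'}=O(d\log s)$, since $d_q=O(2^q\log s)$ is dominated by its top term. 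This yields a total finalization time $\tau_{total}=O\big(l\log s\cdot\min\{k,\sqrt{s}\}+d\log^2 s\big)$, and dividing by $\tau^*=\Omega(l+d)$ gives
\begin{equation}
r_{\mathcal{A}}\le\frac{\tau_{total}}{\tau^*}=\frac{O\big(l\log s\cdot\min\{k,\sqrt{s}\}+d\log^2 s\big)}{\Omega(l+d)}=O\big(\log s\cdot\min\{k,\sqrt{s}\}+\log^2 s\big),
\end{equation}
where the load term uses $l/(l+d)\le 1$ and the distance term uses $d/(l+d)\le 1$; since $t$ is arbitrary this bounds $r_{\mathcal{A}}$.

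The step I expect to be the main obstacle is exactly the middle combination: rigorously showing that the load contributions aggregate as a maximum over layers, so the $\min\{k,\sqrt{s}\}$ term picks up only a single $\log s$ from the sublayer/overlap contention rather than a $\log D\cdot\log s$ factor from naively summing over all $O(\log d)$ layers. This requires arguing carefully that under the control-passing protocol non-overlapping clusters at different layers execute concurrently, and that each shard's own contention is capped by the $O(\log s)$ sparse-cover overlap bound rather than the number of layers. A secondary point to nail down is that Theorem~\ref{theorem:stateful-upper-bound-general-graph-competative-ratio} still applies cleanly when $\lambda_C=d_q$ exceeds the true transaction distance $\Theta(2^q)$ by the $\log s$ stretch, since this stretch is precisely what produces the extra $\log s$ in the final distance term $\log^2 s$.
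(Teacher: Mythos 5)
Your proposal follows essentially the same route as the paper's proof: apply the stateful single-leader bound $O(l_q\cdot\min\{k,\sqrt{s}\}+d_q)$ within each cluster, telescope the geometric diameter sum $\sum_{q=0}^{q'} d_q \le 2d_{q'} = O(d\log s)$ across the $H_2=O(\log s)$ sublayers, and divide by the lower bound $\Omega(l+d)$. The step you flag as the main obstacle---why the load term picks up only one $\log s$ factor rather than $\log d\cdot\log s$---is handled (implicitly, and just as informally) in the paper not by a concurrency argument about \texttt{scheduleControl}, but by the fact that each transaction has exactly one home cluster, so the per-cluster loads at any fixed destination shard are disjoint and sum to at most $l$ within each sublayer, which is the clean way to justify the aggregation you describe.
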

\begin{proof}
Similar to Theorem~\ref{theorem:stateless-upper-bound-multi-leader-scheduler}, consider the destination shard $S_j$, as discussed in the proof of Theorem~\ref{theorem:stateful-upper-bound-general-graph-competative-ratio}, if we had only subtransactions from one leader shard of cluster layer $q$ where the distance between the transaction and its accessing shard is at most $d_q$, then the time to process transactions is $O(l \cdot \min\{k, \sqrt{s} \} + d_q)$ or equivalently at most $c_1(l \cdot \min\{k, \sqrt{s}\} + d_q)$ time for some positive constant $c_1$. Suppose $q'$ is the maximum layer accessed by any transaction where the diameter of the cluster on that layer is at most $d_{q'}$.  Then the destination shard $S_j$ needs to process subtransactions from all layers $0,\ldots, q'$ and from sublayers $0, \ldots, H_2-1$, and those transactions are processed according to their assigned order.

As discussed in Section~\ref{sec:shard-clustering}, a cluster at layer $q$ has a diameter at most $O(2^q \log s)$. Thus $d_q = O(2^q\log s) = c2^q\log s$, for some positive constant $c$. This implies $\sum_{q=0}^{q'}d_q \leq 2d_{q'}$.
Thus, the total time unit required by Algorithm \ref{alg:stateful-multi-leader-scheduler} to process all the transactions from all layer and sublayers at destination shard $S_j$
is at most:
\begin{equation}
\label{eqn:stateful-combined-layer-sublayer}
 \tau_{total} \leq \sum_{q=0}^{q'} \sum_{r=0}^{H_2-1}  c_1(l \cdot \min\{k, \sqrt{s} \} + d_q) \leq c_1lH_2\cdot min\{k,\sqrt{s}\} + 2c_1d_{q'}H_2 \ .
\end{equation}

We can replace $H_2 = c_2\log s$ as we have $O(\log s)$ sublayers (see Section~\ref{sec:shard-clustering}) and $d_{q'} = c_3 d\log s$, where $c_2$ and $c_3$ are some positive constants, then Equation~\ref{eqn:stateful-combined-layer-sublayer} becomes:
$$c_1l\cdot c_2\log s \cdot min\{k,\sqrt{s}\} + 2c_1\cdot c_3d\log s \cdot c_2 \log s =>$$
$$O(l\log s \cdot min\{k,\sqrt{s}\}+d\log^2s) \ .$$

As discussed in Theorem~\ref{theorem:stateful-upper-bound-general-graph-competative-ratio}, $\Omega(l+d)$ is a lower bound. Thus, we have that the competitive ratio of Algorithm~\ref{alg:stateful-multi-leader-scheduler} as $O(\log s\cdot \min\{k, \sqrt{s}\}+\log^2 s)$.
 
\end{proof}


\section{Conclusion}
\label{sec:conclusion}

We presented efficient scheduling algorithms for processing dynamic transactions in blockchain sharding systems. Our proposed framework operates under a partially synchronous communication model, which realistically captures the behavior of many real-world blockchain environments.
We introduced both stateless and stateful scheduling models, each of which includes single-leader and multi-leader algorithms for transaction scheduling and processing. For these algorithms, we provided competitive ratios relative to an optimal scheduler and established both upper and lower bounds on the scheduling delay. To the best of our knowledge, this is the first provably efficient dynamic transaction scheduling framework tailored for blockchain sharding.

For future work, we plan to explore efficient inter-shard communication mechanisms, particularly under conditions of network congestion where communication links have bounded capacity. We also aim to conduct extensive simulations and real-world experiments to evaluate the practical performance of our proposed protocols.

\begin{acks}
This paper is supported by NSF grant CNS-2131538.
\end{acks}

\bibliographystyle{ACM-Reference-Format}
\balance
\bibliography{references}

\end{document}